\documentclass[aps,prl,reprint,groupedaddress]{revtex4-2}
\usepackage{epsfig,amssymb,amsmath,mathrsfs,amsthm,graphicx,float,color,hyperref}

\usepackage{dcolumn}
\usepackage{bm}
\usepackage{multirow}
\usepackage{enumitem}
\usepackage{MnSymbol}
\usepackage{physics}
\usepackage[ruled,vlined]{algorithm2e}

\setcounter{secnumdepth}{2}

\def\<{\langle}\def\>{\rangle}
\def\bbra#1{\llangle#1\rvert}
\def\kett#1{\lvert#1\rrangle}
\def\bbrakett#1#2{\llangle#1|#2\rrangle}

\DeclareMathOperator*{\argmin}{arg\,min}
\theoremstyle{plain}
\newtheorem{theorem}{Theorem}
\newtheorem{lemma}{Lemma}

\theoremstyle{definition}

\theoremstyle{remark}

\begin{document} 
\title{Heisenberg-Limited Quantum Metrology without Ancillae}
\author{Qiushi Liu}
\email{qsliu@cs.hku.hk}
\affiliation{
 QICI Quantum Information and Computation Initiative, Department of Computer Science, School of Computing and Data Science, The University of Hong Kong, Pokfulam Road, Hong Kong, China
}
\author{Yuxiang Yang}
\email{yuxiang@cs.hku.hk}
\affiliation{
 QICI Quantum Information and Computation Initiative, Department of Computer Science, School of Computing and Data Science, The University of Hong Kong, Pokfulam Road, Hong Kong, China
}
\date{\today}

\begin{abstract}
Extensive research has been dedicated to the asymptotic theory of quantum metrology, where the goal is to determine the ultimate precision limit of quantum channel estimation when many accesses to the channel are allowed. The ultimate limit has been well established, but in general a noiseless and controllable ancilla is required for attaining it. Little is known about the metrological performance without noiseless ancillae, which is more relevant in practical circumstances. In this Letter, we present a novel theoretical framework to address this problem, bridging quantum metrology and the asymptotic theory of quantum channels. Leveraging this framework, we prove sufficient conditions for achieving the Heisenberg limit with repeated applications of the channel to be estimated, both with and without applying interleaved unitary control operations. For the latter case, we design an algorithm to identify the control operation explicitly. 
\end{abstract}

\maketitle

\emph{Introduction}---Quantum metrology \cite{giovannetti2004quantum,Giovannetti2006PRL,Degen17RMP} is one of the most promising quantum technologies in and beyond the noisy intermediate-scale quantum era \cite{Preskill2018quantumcomputingin}. Its power has been demonstrated in a wide range of contexts, including gravitational wave detection \cite{Schnabel2010,LIGO19PRL}, quantum clocks \cite{Buzek1999PRL,Pedrozo-Penafiel2020}, and high-resolution imaging \cite{Brida2010,LeSage2013}. 

In a prototypical setup, one would like to estimate a parameter $\theta$, given $N$ queries to an unknown quantum channel $\mathcal E_\theta$ in each round of the experiment. By leveraging quantum entanglement in preparing a probe state (parallel strategy) \cite{giovannetti2004quantum,Giovannetti2006PRL} or coherently applying a sequence of $\mathcal E_\theta$ interleaved with control operations (sequential strategy) \cite{Giovannetti2006PRL,vanDam07PRL,Demkowicz14PRL}, the ultimate limit of mean squared error (MSE), $\delta \hat{\theta}^2$, is the Heisenberg limit (HL), $1/N^2$. The $1/N^2$ scaling is known to hold for unitary channel estimation \cite{Yuan15PRL} but is susceptible to noise \cite{Demkowicz-Dobrzanski2012}, which sometimes results in the standard quantum limit (SQL) $1/N$, hindering the quantum scaling advantage. It is thus a desideratum to investigate under what types of noise the HL is still achievable.

Extensive research has focused on identifying the ultimate precision limit for quantum channel estimation \cite{Fujiwara2008,Escher2011,Demkowicz-Dobrzanski2012,Kolodynski_2013NJP,Demkowicz14PRL,Sekatski2017quantummetrology,Demkowicz-Dobrza2017PRX,Zhou2018NC,Zhou2021PRXQ,Kurdzialek23PRL}. The optimal precision follows either the HL or the SQL, depending on whether the channel to be estimated satisfies the ``Hamiltonian-not-in-Kraus-span'' (HNKS) condition \cite{Zhou2021PRXQ}. However, the attainability of optimal limits generally assumes access to unbounded quantum memory when $N \rightarrow \infty$. When the HL can be achieved with sequential strategies, there is no need for the unbounded memory, but noiseless ancillae and syndrome measurements are required for repeatedly applying quantum error correction (QEC) \cite{Demkowicz-Dobrza2017PRX,Zhou2018NC,Zhou2021PRXQ}.

The requirement of noiseless ancillae, however, is a major obstacle in many real-world applications. To circumvent this issue, ancilla-free probe states have been proposed in some metrological strategies \cite{Layden19PRL,Zhou24PRAachieving}, which, nevertheless, either require highly entangled many-body probe states or involve real-time syndrome measurements and delicate QEC operations that introduce additional ancillae. Such requirements are not only challenging for near-term devices \cite{Hou2019PRL,Hou2021PRL}, but also impose strong demands on the programmability of quantum sensors even in the long term. 

In this Letter, we establish a theoretical framework for single-parameter quantum channel estimation with ancilla-free sequential strategies and uncover a close connection between metrological limits and the spectral properties of quantum channels. In contrast to existing results that rely on QEC, we apply our framework to identify sufficient conditions for achieving the \emph{ancilla-free HL} \footnote{In this work we only care about the precision scaling $1/N^2$, regardless of the coefficient.}, with either control-free strategies or identical unitary control operations that can be systematically identified by an algorithm. We summarize our main results in Table~\ref{tab:summary}. The strategies we consider can be implemented by the simple setup in Fig.~\ref{fig:setup}, motivated by typical optical experiments, where it is favorable to apply identical unitary control operations in an optical loop \cite{Hou2019PRL}. 

\begin{table}[!htbp]
\caption{\label{tab:summary}
Comparison of main results with previous works.
}
\begin{ruledtabular}
\begin{tabular}{lll}
\textrm{Scaling}&
\textrm{Condition}&
\textrm{Strategy}\\
\colrule
SQL \cite{Demkowicz-Dobrza2017PRX,Zhou2018NC,Zhou2021PRXQ} & HNKS violated & QEC, unbounded ancilla\\
HL \cite{Demkowicz-Dobrza2017PRX,Zhou2018NC,Zhou2021PRXQ} & HNKS satisfied & QEC, bounded ancilla\\
\multirow{2}{*}{Ancilla-free HL} & Theorem~\ref{thm:condition for HL} & Control-free \\
& Theorem~\ref{thm:HL with unitary control}\footnote{See Theorem~\ref{thm:HL Pauli noise} for a special case in terms of parameter estimation under Pauli noise.} & Unitary control\\
\end{tabular}
\end{ruledtabular}
\end{table}

\begin{figure}[!htbp]
    \centering
    \includegraphics[width=0.3\textwidth]{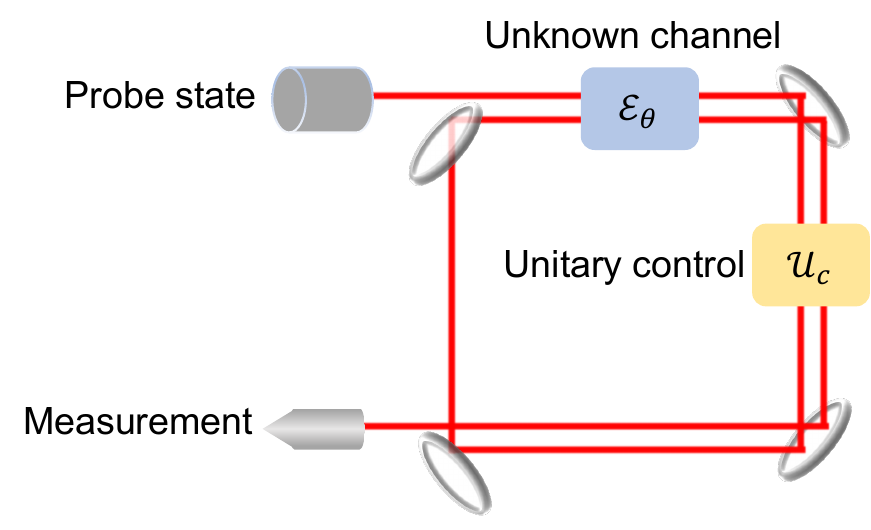}
    \caption{Control-enhanced sequential strategy for estimating $\theta$ from $N$ queries to $\mathcal E_\theta$ ($N=2$ as an illustrated example). Trivial $\mathcal U_c=\mathcal I$ corresponds to a control-free strategy.}
    \label{fig:setup}
\end{figure}

\emph{Lower bound on quantum Fisher information}---Given a single-parameter quantum state $\rho_\theta$, the MSE $\delta \hat{\theta}^2:=\mathrm E\left[(\hat{\theta}-\theta)^2\right]$ for any unbiased estimator $\hat{\theta}$ is restricted by the quantum Cramér-Rao bound \cite{helstrom1976quantum,holevo2011probabilistic,Braunstein1994PRL} $\delta \theta^2 \ge 1/[\nu F^Q(\rho_\theta)]$, where $F^Q(\rho_\theta)$ is the quantum Fisher information (QFI) and $\nu$ is the number of times the experiment is repeated. As the bound is achievable in the limit of large $\nu$, the QFI can be regarded as a score function for single-parameter quantum metrology, quantifying the sensitivity of $\rho_\theta$ to $\theta$. Note that QFI is relevant for \emph{local estimation} in the neighborhood of the true value $\theta=\theta_0$ \footnote{Certainly, the true value $\theta_0$ should be only roughly known before the experiment. In practice, one can adaptively update the metrological strategy based on the previous estimate $\hat{\theta}$ during the experiment \cite{O_E_Barndorff-Nielsen_2000,Hayashi2011CMP,Yang2019CMP}.}. In quantum channel estimation, we focus on the QFI of the output state $\rho_\theta$ of protocols that query the channel $\mathcal E_\theta$ $N$ times. We say the HL (SQL) is achieved if the QFI scales as $N^2$ ($N$), as $N\rightarrow \infty$.

The QFI is defined as $F^Q(\rho_\theta):=\Tr(\rho_\theta L_{\rho_\theta}^2)$,
where $L_{\rho_\theta}$ is the symmetric logarithmic derivative (SLD) defined through $2\dot \rho_\theta = \rho_\theta L_{\rho_\theta} + L_{\rho_\theta} \rho_\theta$, having denoted the derivative of $X$ with respect to $\theta$ by $\dot{X}$. For a pure state $\rho_\theta=\dyad{\psi_\theta}$, the QFI can be computed in a simpler way: 
\begin{equation} \label{eq:QFI pure state}
    F^Q\left({\dyad{\psi_\theta}}\right) = 4\left(\braket{\dot{\psi}_\theta}-\left\lvert\braket{\psi_\theta}{\dot{\psi}_\theta}\right\rvert^2 \right).
\end{equation}

In general, however, it can be challenging to analytically evaluate the QFI of $N$ queries to the channel $\mathcal E_\theta$ for large $N$, as computing the SLD $L_{\rho_\theta}$ requires the spectral decomposition of $\rho_\theta$ \cite{Braunstein1994PRL}. It is thus more practical to obtain a lower bound on the QFI, which already suffices to provide a guaranteed precision scaling. To this end, we employ a technique based on vectorization to derive such an efficiently computable lower bound.

We first bound the state QFI $F^Q(\rho_\theta)$ and then investigate its scaling with $N$. Here, it is useful to represent a quantum state $\rho$ on $d$-dimensional Hilbert space $\mathcal H$ as a $d^2$-dimensional vector $\kett{\rho}$, where $\kett{\rho} = \sum_{ij}\mel{i}{\rho}{j}\ket{i}\ket{j}$ denotes the vectorization of $\rho$ in the computational basis $\{\ket{i}\}_i$. Ref.~\cite{Alipour14PRL} showed that $F^Q\left(\rho_\theta\right)$ has a close relation to the \emph{associated QFI}, 
\begin{equation}
    \tilde F^Q\left(\tilde \rho_\theta\right)=\Tr\left(\tilde \rho_\theta L_{\tilde \rho_\theta}^2\right),
\end{equation}
where $\tilde \rho_\theta:=\kett{\rho_\theta}\bbra{\rho_\theta}/\Tr \left(\rho_\theta^2\right)$ is the density operator of the \emph{associated state}, and $L_{\tilde \rho_\theta}$ is the \emph{associated SLD} defined by $2\dot{\tilde \rho}_\theta=\tilde \rho_\theta L_{\tilde \rho_\theta} + L_{\tilde \rho_\theta} \tilde \rho_\theta$. Specifically, the QFI of $\rho_\theta$ can be lower bounded by the associated QFI up to a multiplicative factor \cite{Alipour14PRL}:
\begin{equation}
    F^Q\left(\rho_\theta\right) \ge \frac{\Tr\left(\rho_\theta^2\right)}{4\lambda_{\mathrm{max}}\left(\rho_\theta\right)} \tilde F^Q \left(\tilde \rho_\theta\right),
\end{equation}
where $\lambda_{\mathrm{max}}\left(\rho_\theta\right)$ is the largest eigenvalue of $\rho_\theta$. Crucially, the prefactor is bounded in an $N$-independent interval: $(\sqrt{d}-1)/[2(d-1)] \le \Tr\left(\rho_\theta^2\right)/[4\lambda_{\mathrm{max}}(\rho_\theta)] \le 1/4$.

In this Letter, we focus on the scaling of the associated QFI $\tilde F^Q \left(\tilde \rho_\theta\right)$, which is easier to compute and serves as a sufficient condition for attaining the same scaling in terms of the original QFI $F^Q(\rho_\theta)$. As a starting point, we slightly generalize the result in Ref.~\cite{Alipour14PRL} and introduce a lemma providing a closed-form formula for $\tilde F^Q \left(\tilde \rho_\theta\right)$ (see the proof in Ref.~\cite{SM_note}).
\begin{lemma} \label{lem:associated QFI}
    The associated QFI of the associated state $\tilde \rho_\theta=\kett{\rho_\theta}\bbra{\rho_\theta}/\Tr \left(\rho_\theta^2\right)$ is given by
    \begin{equation} \label{eq:associated QFI state}
        \tilde F^Q \left(\tilde \rho_\theta\right) = 4 \left\{\frac{\bbrakett{\dot \rho_\theta}{\dot \rho_\theta}}{\Tr\left(\rho_\theta^2\right)} - \left[\frac{\bbrakett{\rho_\theta}{\dot \rho_\theta}}{\Tr\left(\rho_\theta^2\right)}\right]^2\right\}.
    \end{equation}
\end{lemma}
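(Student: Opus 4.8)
The plan is to exploit the fact that the associated state $\tilde\rho_\theta$ is a \emph{pure} state, so that the simple pure-state QFI formula \eqref{eq:QFI pure state} applies in place of the SLD definition \eqref{eq:mixed state QFI}. Indeed, $\kett{\rho_\theta}\bbra{\rho_\theta}$ is a rank-one projector, and dividing by $\Tr(\rho_\theta^2)=\bbrakett{\rho_\theta}{\rho_\theta}$ normalizes it, so $\tilde\rho_\theta=\dyad{\psi_\theta}$ with the normalized vectorized state $\kett{\psi_\theta}:=\kett{\rho_\theta}/\sqrt{\Tr(\rho_\theta^2)}$. It thus suffices to evaluate the two inner products $\bbrakett{\dot\psi_\theta}{\dot\psi_\theta}$ and $\bbrakett{\psi_\theta}{\dot\psi_\theta}$ appearing in \eqref{eq:QFI pure state}, now read in the $d^2$-dimensional vectorized Hilbert space with its Hilbert--Schmidt inner product $\bbrakett{A}{B}=\Tr(A^\dagger B)$.

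First I would differentiate $\kett{\psi_\theta}$ by the product rule, writing $n(\theta):=\sqrt{\Tr(\rho_\theta^2)}$ so that $\kett{\dot\psi_\theta}=\kett{\dot\rho_\theta}/n-(\dot n/n^2)\kett{\rho_\theta}$. The normalization derivative follows from $2n\dot n=\frac{\d}{\d\theta}\bbrakett{\rho_\theta}{\rho_\theta}=\bbrakett{\dot\rho_\theta}{\rho_\theta}+\bbrakett{\rho_\theta}{\dot\rho_\theta}$. Here the Hermiticity of $\rho_\theta$ (hence of $\dot\rho_\theta$) is essential: it makes $\bbrakett{\rho_\theta}{\dot\rho_\theta}=\Tr(\rho_\theta\dot\rho_\theta)$ real and equal to its conjugate $\bbrakett{\dot\rho_\theta}{\rho_\theta}$, so $\dot n=\bbrakett{\rho_\theta}{\dot\rho_\theta}/n$. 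Substituting this into $\bbrakett{\psi_\theta}{\dot\psi_\theta}$ and using $\bbrakett{\rho_\theta}{\rho_\theta}=n^2$, the two contributions cancel exactly, giving $\bbrakett{\psi_\theta}{\dot\psi_\theta}=0$. Consequently the second term of \eqref{eq:QFI pure state} drops out and the QFI reduces to $\tilde F^Q(\tilde\rho_\theta)=4\bbrakett{\dot\psi_\theta}{\dot\psi_\theta}$.

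It then remains to expand $\bbrakett{\dot\psi_\theta}{\dot\psi_\theta}$. Inserting the expression for $\kett{\dot\psi_\theta}$ yields a term $\bbrakett{\dot\rho_\theta}{\dot\rho_\theta}/n^2$, two cross terms proportional to $\bbrakett{\rho_\theta}{\dot\rho_\theta}$ and its conjugate, and a term proportional to $\bbrakett{\rho_\theta}{\rho_\theta}=n^2$. Collecting these and again invoking $\bbrakett{\rho_\theta}{\dot\rho_\theta}\in\mathbb R$, the cross and quadratic contributions combine into a single subtraction, producing
\begin{equation*}
    \bbrakett{\dot\psi_\theta}{\dot\psi_\theta}=\frac{\bbrakett{\dot\rho_\theta}{\dot\rho_\theta}}{\Tr(\rho_\theta^2)}-\left[\frac{\bbrakett{\rho_\theta}{\dot\rho_\theta}}{\Tr(\rho_\theta^2)}\right]^2,
\end{equation*}
which is exactly \eqref{eq:associated QFI state} after multiplying by $4$. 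The computation is otherwise routine; the only points demanding care are the bookkeeping of the $\theta$-dependent normalization $\Tr(\rho_\theta^2)$, whose derivative supplies precisely the subtracted term and is presumably where the statement extends the fixed-normalization setting of Ref.~\cite{Alipour14PRL}, and the Hermiticity-driven reality of $\bbrakett{\rho_\theta}{\dot\rho_\theta}$, which is what guarantees both that the overlap term vanishes and that the final expression is manifestly real. I therefore expect no serious obstacle beyond keeping these two facts in view.
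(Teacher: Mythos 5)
Your proposal is correct: every step checks out, including the key computations $\bbrakett{\psi_\theta}{\dot\psi_\theta}=0$ (which follows from $\dot n=\bbrakett{\rho_\theta}{\dot\rho_\theta}/n$ and the reality of $\Tr(\rho_\theta\dot\rho_\theta)$) and the expansion of $\bbrakett{\dot\psi_\theta}{\dot\psi_\theta}$, whose cross and quadratic terms combine to $-[\bbrakett{\rho_\theta}{\dot\rho_\theta}/\Tr(\rho_\theta^2)]^2$ exactly as claimed. However, your route differs from the paper's in its organization. The paper never normalizes the vectorized state: it works directly with the projector $\tilde\rho_\theta$, uses purity ($\tilde\rho_\theta^2=\tilde\rho_\theta$) to take the associated SLD as $L_{\tilde\rho_\theta}=2\dot{\tilde\rho}_\theta$, computes $L_{\tilde\rho_\theta}$ and $L_{\tilde\rho_\theta}^2$ explicitly, and evaluates $\Tr(\tilde\rho_\theta L_{\tilde\rho_\theta}^2)$. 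You instead pass to the normalized vector $\kett{\psi_\theta}=\kett{\rho_\theta}/\sqrt{\Tr(\rho_\theta^2)}$ and invoke the pure-state formula, Eq.~(\ref{eq:QFI pure state}), in the $d^2$-dimensional space. What your route buys: it is shorter, it turns the paper's after-the-fact remark that Eq.~(\ref{eq:associated QFI state}) is ``analogous'' to the pure-state QFI into the actual mechanism of the proof, and it cleanly isolates the origin of the subtracted term as the derivative of the $\theta$-dependent normalization (since the overlap term of Eq.~(\ref{eq:QFI pure state}) vanishes identically for your representative). What the paper's route buys: the associated QFI was \emph{defined} via the associated SLD, $\tilde F^Q=\Tr(\tilde\rho_\theta L_{\tilde\rho_\theta}^2)$, so computing from that definition requires no appeal to the (standard, but separate) fact that the SLD definition reduces to Eq.~(\ref{eq:QFI pure state}) for pure families --- in your write-up this identification is the one logical step you use without proof, and it holds here precisely because $\tilde\rho_\theta$ remains pure for all $\theta$ and your $\kett{\psi_\theta}$ is a globally defined normalized representative (phase ambiguity being immaterial since Eq.~(\ref{eq:QFI pure state}) is phase-invariant). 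Both derivations ultimately rest on the same two facts you flag: purity of the associated state and the Hermiticity-driven reality of $\bbrakett{\rho_\theta}{\dot\rho_\theta}$.
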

Note that Eq.~(\ref{eq:associated QFI state}) takes a simple form analogous to the pure state QFI given by Eq.~(\ref{eq:QFI pure state}). 

\emph{Asymptotics of quantum channels}---Before delving into the scaling of $\tilde F^Q \left(\tilde \rho_\theta\right)$, let us briefly recall some spectral properties of quantum channels. A quantum channel, i.e., a completely positive trace-preserving map, $\mathcal E_\theta$, can be described by Kraus operators: $\mathcal E_\theta(\rho)=\sum_{i=1}^r K_i^\theta \rho K_i^{\theta\dagger}$. When treated as a superoperator, a quantum channel acting on a $d$-dimensional system has a \emph{Liouville representation} \cite{gilchrist2011vectorizationquantumoperationsuse,Wood15QIC} given by a $d^2 \times d^2$ transition matrix \footnote{Here we consider only quantum channels with equal input and output spaces.}:
\begin{equation} \label{eq:Liouville representation}
    T_\theta:=\sum_{i=1}^r K_i^\theta \otimes K_i^{\theta*}.
\end{equation}
Now we can express the action of the quantum channel on a state as the multiplication of a matrix by a vector, i.e., we have $\kett{\mathcal E_\theta(\rho)}=T_\theta \kett{\rho}$. 

We focus on the spectral properties of $T_\theta$ (see Refs.~\cite{wolf2012quantum,Burgarth_2013,watrous2018theory} for further details). Consider the spectrum $\{\lambda_i\}_i$ of $T_\theta$, such that $T_\theta\kett{R_i}=\lambda_i\kett{R_i}$. For clarity, we call $\{\kett{R_i}\}_i$ the \emph{eigenvectors} and $\{R_i\}_i$ the \emph{eigenmatrices}. All the eigenvalues satisfy $|\lambda_i|\le 1$ (with at least one $\lambda_j=1$ associated with the \emph{fixed point state} $\rho_*$), and those satisfying $|\lambda_i|= 1$ are called \emph{peripheral eigenvalues}, with the corresponding eigenvectors called \emph{peripheral eigenvectors}. 

We denote the set of indices of peripheral eigenvalues by $\mathsf{S}$. In general, $T_\theta$ is not necessarily diagonalizable, but it may admit only a Jordan normal form. However, for the peripheral eigenvalues $\{\lambda_i\mid i\in\mathsf{S}\}$, the Jordan blocks are all one-dimensional \cite[Proposition 6.2]{wolf2012quantum}. In the asymptotic limit of applying $N$ quantum channels repeatedly, i.e., applying $T_\theta^N$, only the peripheral eigenvalues and eigenvectors take effect in the output $\lim_{N\rightarrow\infty}T_\theta^N\kett{\rho}$, which could greatly facilitate our later analysis. 

\emph{Achieving the HL without control}---Now, we first consider applying the quantum channel $\mathcal E_\theta$ $N$ times sequentially without any control. The goal is thus to evaluate the QFI of $\rho_\theta=\mathcal E_\theta^N(\rho_0)$ for a suitable probe state $\rho_0$. With the vectorization trick, we focus on the associated QFI for $T_\theta^N\kett{\rho_0}$ to obtain a lower bound on $F^Q\left[\mathcal E_\theta^N\left(\rho_0\right)\right]$. Leveraging Lemma~\ref{lem:associated QFI}, we can connect the asymptotic associated QFI to the spectral properties of $T_\theta$ as follows.

\begin{lemma}
\label{lem:asymptotic associated QFI}
    Consider a quantum channel $T_\theta$ defined by Eq.~(\ref{eq:Liouville representation}) with peripheral eigenvalues $\{\lambda_i \mid i\in \mathsf{S}\}$ and peripheral eigenvectors $\{\kett{R_i}\mid i\in \mathsf{S}\}$. For any input state $\kett{\rho_0}=\sum_{i=1}^{d^2}a_{i}\kett{R_i}$ (having chosen a set of linearly independent vectors $\{\kett{R_i}\}_{i=1}^{d^2}$ that forms a Jordan basis \footnote{The choices of $\kett{R_i}$ coincide with the peripheral eigenvectors of $T_\theta$ when $i\in \mathsf S$.}), after $N\rightarrow\infty$ repeated application of $T_\theta$, the asymptotic associated QFI of $\tilde \rho_\theta=\kett{\rho_\theta}\bbra{\rho_\theta}/\Tr \left(\rho_\theta^2\right)$ for $\kett{\rho_\theta}=T_\theta^N\kett{\rho_0}$ is given by 
    \begin{equation} \label{eq:asymptotic associated QFI}
        \lim_{N\rightarrow\infty}\frac{\tilde F^Q \left(\tilde \rho_\theta\right)}{N^2} = 4 \left[\sum_{i,j\in\mathsf S}\beta_{ij}\frac{\dot{\lambda}_i^*\dot{\lambda}_j}{\lambda_i^*\lambda_j} - \left(\sum_{i,j\in\mathsf S}\beta_{ij}\frac{\dot{\lambda}_j}{\lambda_j}\right)^2\right],
    \end{equation}
where $\beta_{ij}=(\lambda_i^*\lambda_j)^Na_ia_j\bbrakett{R_i}{R_j}/\Tr\left(\rho_\theta^2\right)$ satisfying $\sum_{i,j\in\mathsf S}\beta_{ij}=1$ and $\beta_{ij}=\beta_{ji}^*$ \footnote{We assume that $\lambda_i, \kett{R_i}, a_i$ are all continuously differentiable with respect to $\theta$.}.
\end{lemma}

Lemma~\ref{lem:asymptotic associated QFI} straightforwardly yields a sufficient condition for achieving the HL without control, formulated as the existence of the nonzero derivative of a peripheral eigenvalue, as stated in Theorem~\ref{thm:condition for HL} (see the proofs of Lemma~\ref{lem:asymptotic associated QFI} and Theorem~\ref{thm:condition for HL} in Ref.~\cite{SM_note}).

\begin{theorem}\label{thm:condition for HL}
    If, at $\theta=\theta_0$, there exists some peripheral eigenvalue $\lambda_j$ such that $\dot \lambda_j\neq 0$, then the output state QFI achieves the HL by the repeated application of $T_\theta$ without control. The input state can be taken as $\rho_0=(1-\alpha)I/d + \alpha \rho_* +\beta\left(R_j + R_j^\dagger\right) $, where $\alpha, \beta >0$ are arbitrary parameters for $\rho_0$ to be a legitimate density matrix and $R_j$ is the peripheral eigenmatrix associated with $\lambda_j$.
\end{theorem}

\emph{Achieving the HL with unitary control}---The simple condition provided by Theorem~\ref{thm:condition for HL} does not hold in some metrological scenarios. For instance, the HL cannot be achieved without control for estimating $\theta$ given $N$ queries to a unitary channel $\mathcal U_\theta(\cdot) = U_\theta \cdot U_\theta^\dagger$, where $U_\theta=e^{-\mathrm i (\cos \theta \sigma_x + \sin \theta \sigma_z)}$ \cite{Pang14PRA,Liu2015SR,Yuan15PRL}. One can check that all the peripheral eigenvalues of $U_\theta \otimes U_\theta^*$ have vanishing derivatives, as only the eigenbasis changes with $\theta$. Nevertheless, we can apply a unitary control $U_{\theta_0}^\dagger$ following each query to $U_\theta$ to recover the HL at $\theta=\theta_0$. 

It is thus natural to ask under what circumstances unitary control can be applied to achieve the HL in noisy metrology, which remains largely unexplored. Building on Theorem~\ref{thm:condition for HL}, we establish a sufficient condition for achieving the HL using interleaved identical unitary control.

\begin{theorem} \label{thm:HL with unitary control}
    Consider estimating a quantum channel $T_\theta$ defined by Eq.~(\ref{eq:Liouville representation}). We denote by $P$ the projection on the subspace $\mathcal P$ of eigenvectors of $T_\theta^\dagger T_\theta$ with eigenvalues $1$.  
    If, at $\theta=\theta_0$, (i)  
    \begin{equation} \label{eq:nonvanishing signal}
        P T_\theta^\dagger\dot T_\theta P \neq 0\quad\text{(nonvanishing signal)}
    \end{equation}
    and (ii) there exists a unitary $U_c$ such that for some eigenmatrix $R_0$ of $P T_\theta^\dagger \dot{T}_\theta P$ associated with a nonzero eigenvalue,
    \begin{equation} \label{eq:unitary equivalence}
        U_c^\dagger R_0 U_c =  \sum_kK_k^{\theta}R_0K_k^{\theta\dagger} \quad\text{(partially reversible)}
    \end{equation}
 then the output state QFI achieves the HL by applying the same unitary control $U_c$ following each channel $T_\theta$. 
\end{theorem}
The proof is provided in Ref.~\cite{SM_note}. The two conditions in Theorem~\ref{thm:HL with unitary control} have intuitive interpretations. The first condition indicates that the signal operator $T_\theta^\dagger \dot{T}_\theta$ has a nontrivial effect on the subspace $\mathcal P$, and the second condition ensures that the effective action of the quantum channel on $R_0$ is reversible, simulated by a unitary operation. By appending the unitary control $U_c$ to $T_\theta$, the strategy reduces to the repeated application of $(U_c \otimes U_c^*) T_\theta$, and therefore, Theorem~\ref{thm:condition for HL} is applicable.

\emph{Sketch of an algorithm for unitary control}---We further design a numerical algorithm (see Algorithm~\ref{alg:unitary control} in Ref.~\cite{SM_note}) that can successfully identify a unitary control operation $U_c$ if and only if Eq.~(\ref{eq:unitary equivalence}) holds. Note that, except for the trivial case where $R_0$ and $R_0^\dagger$ are associated with the same eigenvalue of $PT_\theta^\dagger \dot{T}_\theta P$ (such that $R_0$ can be chosen to be Hermitian and finding $U_c$ is trivial), Eq.~(\ref{eq:unitary equivalence}) can be reformulated as
\begin{equation}
    U_c^\dagger R_i U_c =  \sum_kK_k^{\theta_0}R_iK_k^{\theta_0\dagger},\ \forall i=1,2,
\end{equation}
where $R_1=R_0+R_0^\dagger$ and $R_2=\mathrm i (R_0-R_0^\dagger)$ are both Hermitian. The key idea behind the algorithm is using the principal angles \cite{jordan1875essai,Bjorck1973,Galantai06Jordan} to characterize the relation between subspaces. For two arbitrary subspaces $F$ and $G$ associated with projections $\Pi_F$ and $\Pi_G$ on the subspaces, the principal angles can be determined by performing the singular value decomposition of $\Pi_F \Pi_G$ (i.e., the principal angles are arccosine values of the singular values). The geometrical relationship between two subspaces can then be fully characterized by the relationships between a set of lower-dimensional subspaces. By iteratively reducing the dimensions, the final reduced subspaces either become mutually orthogonal or share equal dimensions with degenerate principal angles. It is thus possible to select canonical orthonormal bases for each subspace such that their inner products are uniquely determined. If these inner products are preserved under the action of $T_\theta$, then $U_c$ satisfying Eq.~(\ref{eq:unitary equivalence}) can be identified, and vice versa. A formal description and proof of the algorithm are provided in Ref.~\cite{SM_note}, with the code implementation openly available on GitHub \cite{github}.

\emph{Achieving the HL under Pauli noise}---For parameter estimation under Pauli noise, a widely considered scenario in quantum metrology \cite{Sekatski2017quantummetrology,Escher2011,Demkowicz-Dobrzanski2012,Kolodynski_2013NJP,Demkowicz14PRL,Arrad14PRL,Kessler2014PRL,ozeri2013heisenberg,Duer2014PRL,Demkowicz-Dobrza2017PRX,Zhou2018NC,Zhou2021PRXQ,Kurdzialek23PRL}, our framework yields a convenient criterion for achieving the ancilla-free HL. 

Consider estimating an $n$-qubit channel $T_\theta$ defined by Eq.~(\ref{eq:Liouville representation}), where 
\begin{equation} \label{eq:Pauli noisy channel}
    K_i^\theta =\sqrt{p_i} V_\theta P_i U_\theta,\ i=1,\dots,r,
\end{equation}
$\sum_{i=1}^r p_i=1$ with each $p_i > 0$, and $\{P_i\}_i$ are ($n$-qubit) Pauli operators (having included the identity operator $I$ in $\{P_i\}_i$ \footnote{We assume that the Kraus operators of the Pauli noise channel contain a term proportional to identity, which is a common case in practice.}). $U_\theta$ and $V_\theta$ are both unitary operators encoding the parameter. The generators associated with the signal before and after the noise are $H^{(U)}:=iU_\theta^\dagger \dot{U}_\theta$ and $H^{(V)}:=iV_\theta^\dagger \dot{V}_\theta$, and we define $H_{\mathrm{tot}}:=H^{(U)}+H^{(V)}$. We decompose $H_{\mathrm{tot}}=\sum_k \alpha_kQ_k$, where $\{Q_k\}_k$ are ($n$-qubit) Pauli operators (including the identity), and each $\alpha_k \neq 0$ (see the proof in Ref.~\cite{SM_note}). 

\begin{theorem} \label{thm:HL Pauli noise}
    For Kraus operators given by Eq.~(\ref{eq:Pauli noisy channel}), if, at $\theta=\theta_0$, there exists some $Q_k$ such that (i) \footnote{$\langle a,b,\dots,c\rangle$ denotes the group generated by $a,b,\dots,c$. $\mathsf{Span}$ denotes the linear span of the set of group elements.}
    \begin{equation} \label{eq:Q not in span}
        Q_k \notin \mathsf{Span}\left\{\langle P_1, P_2, \dots, P_r\rangle\right\}
    \end{equation}
    and (ii) 
    \begin{equation} \label{eq:Q commutes}
        [Q_k, P_i] = 0,\ \forall i=1,\dots,r,
    \end{equation}
    then the output state QFI achieves the HL by applying the same unitary control $U_c=U_{\theta_0}^\dagger V_{\theta_0}^\dagger$ following each channel $T_\theta$. 
\end{theorem}

\emph{A working example}---Consider estimating a two-qubit noisy channel $T_\theta$ defined by Eq.~(\ref{eq:Liouville representation}), characterized by the Kraus operators $K_i^\theta=K_i^{(\mathrm{noise})}U_t(\theta)$,
where $U_t(\theta)=e^{-\mathrm i t H_0(\theta)}$ for the evolution time $t$, and the parameter $\theta$ of interest is the coupling strength in a Heisenberg model Hamiltonian 
\begin{equation}
    H_0(\theta)= \sigma_z\otimes I + I\otimes \sigma_z +\theta H_J
\end{equation}
for $H_J = \sigma_x\otimes \sigma_x+\sigma_y\otimes \sigma_y+\sigma_z\otimes \sigma_z$. The Pauli noise is described by
\begin{equation}
    \begin{aligned}
         K_1^{(\mathrm{noise})}&=\sqrt{1-p_1-p_2}I,\\
         K_2^{(\mathrm{noise})}&=\sqrt{p_1}\sigma_x\otimes \sigma_x, \\
         K_3^{(\mathrm{noise})}&=\sqrt{p_2}\sigma_x\otimes \sigma_y.
    \end{aligned}
\end{equation}
  
Note that not all $\left\{K_i^{\theta\dagger} K_j^\theta\right\}$ commute with each other, which goes beyond the requirement of the ancilla-free QEC code protocol in Ref.~\cite{Layden19PRL}. It is also impossible to find a codespace stabilized by the noise as in Ref.~\cite{Pereira2023}. Nevertheless, we can easily check that for this noise model the ancilla-free HL is achievable according to Theorem~\ref{thm:HL Pauli noise} \footnote{Note that we can choose $Q_k=\sigma_z\otimes \sigma_z$ in $H_{\mathrm{tot}}=\sigma_x\otimes\sigma_x+\sigma_y\otimes\sigma_y+\sigma_z\otimes\sigma_z$, and the linear span of group elements generated by the Pauli noise operators is $\mathsf{Span}\{I,\sigma_x\otimes\sigma_x,\sigma_x\otimes\sigma_y,I\otimes \sigma_z\}$.}.    

We remark that it is unnecessary to know that the channel to be estimated is in the form of Pauli noise \emph{a priori}. To apply Theorem~\ref{thm:HL with unitary control} and the algorithm for identifying the unitary control, it suffices to have the knowledge about such a ``tomographic'' description of its Liouville representation $T_\theta$ in the neighborhood of the true value $\theta_0$, bypassing the need to explicitly determine the type of the channel (in this case, the Pauli noise characterized by Kraus operators) and making it more convenient for dealing with the experimental data. 

Here, for simplicity of presentation, we assume that we know the forms of unitary $U_t(\theta)$ and the noise $\left\{K_i^{(\mathrm{noise})}\right\}$, respectively. Note that $T_\theta^\dagger T_\theta$ has four eigenvalues of $1$ with eigenvectors in the subspace $\mathcal P$ (associated with the projection $P$ on it) and 
\begin{equation}
    PT_\theta^\dagger \dot{T}_\theta P=-\mathrm i t P T_\theta^\dagger T_\theta [H_J \otimes I - I \otimes H_J^T]P
\end{equation}
has two nonzero eigenvalues, with mutually orthogonal eigenmatrices (up to multiplicative factors)
\begin{equation}
    \begin{aligned}
        R_1 &= U_t(\theta)^\dagger (\ketbra{01}{11}+\ketbra{10}{00}) U_t(\theta),\\
        R_2 &= R_1^{\dagger} = U_t(\theta)^\dagger(\ketbra{00}{10}+\ketbra{11}{01}) U_t(\theta).
    \end{aligned}
\end{equation}
We can choose the unitary control
\begin{equation} \label{eq:control example HL unitary control}
    U_c = U_t(\theta_0)^\dagger
\end{equation}
and take an input state, for example,
\begin{equation} \label{eq:input state example HL unitary control}
    \rho_0 = \frac{I}{2} \otimes \frac{I}{2} + \frac{\alpha}{2} (R_1 + R_1^{\dagger}) = U_t(\theta_0)^\dagger \left[\left(\frac{I}{2} + \alpha\sigma_x\right)\otimes \frac{I}{2}\right] U_t(\theta_0)
\end{equation}
for some $0<\alpha<1/2$ \footnote{Here we do not choose $\alpha=1/2$ corresponding to a pure state, such that the rank of the output state is independent of $\theta$ at $\theta=\theta_0$, to avoid the possible discontinuity issue of the QFI at the rank-changing point. See Refs.~\cite{Dominik17PRA,Seveso_2020,zhou2019exact,Ye22PRA} for more discussions.} and incorporate unitary control $U_c=U_t(\theta_0)^\dagger$ after each application of $T_\theta$. Note that if only the form of $T_\theta$ (instead of separating the unitary evolution and noise) is known, we can apply our algorithm to $T_\theta$ to find a numerical solution for $U_c$.

Remarkably, in this case, preparing the input state Eq.~(\ref{eq:input state example HL unitary control}) is fully robust to any local state preparation error on the second qubit, i.e., the second qubit can be arbitrarily initialized. The reason is that the regulated channel $(U_c\otimes U_c^*)T_{\theta}$ is unitarily diagonalizable at $\theta=\theta_0$. For an arbitrary state of the second qubit $\sigma = \frac{1}{2} (I + r_x\sigma_x + r_y\sigma_y+r_z\sigma_z)$ with $r_x^2+r_y^2+r_z^2=1$, all the contributions from $\sigma_x,\sigma_y,\sigma_z$ vanish in the asymptotic limit, as they are orthogonal to the peripheral eigenspace. Therefore, the choice of $\sigma$ has no detrimental effect on achieving the guaranteed HL.

In the End Matter, we conduct additional analysis to assess the error robustness of the state preparation and measurement (SPAM) as well as control in our strategy. The sequential nature of our approach is particularly well suited for addressing SPAM errors.

It is worth noting that control is essential for achieving the HL here. Without control, $T_\theta$ generically has only one fixed point state (the maximally mixed state) in the peripheral eigenspace, so no information can be retrieved after $N\rightarrow\infty$ applications in a control-free strategy. 

This informative example reveals three key distinctions between our approach and QEC: (i) We provide an algorithmic routine to find the probe state and identical unitary control for achieving the ancilla-free HL in an experiment-friendly scenario, while such a routine is lacking for constructing ancilla-free QEC codes for metrology. (ii) We do not use QEC recovery operations, which may require additional ancillae. (iii) The probe state by our approach does not necessarily lie in a QEC codespace that satisfies the Knill-Laflamme condition \cite{Knill97PRA}.
Overall, our formalism works in a more resource-deficient scenario, which can be of practical interest.

\emph{Discussion}---We establish a systematic formalism for ancilla-free quantum channel estimation. By vectorizing the quantum state and investigating the spectral properties of channels, we derive useful formulas for the QFI with ancilla-free sequential strategies and identify sufficient conditions for achieving the HL in both control-free and unitary-control-enhanced scenarios. This formalism is also applicable while allowing bounded ancillae, as the problem reduces to estimation of $\mathcal E_\theta \otimes \mathcal N_A$, where $\mathcal N_A$ is the noise channel acting on the ancillae. Moreover, our technique may also be used to investigate the metrological limits of ancilla-free parallel strategies, which have been partially explored with different methods previously \cite{Alipour14PRL,knysh2014true,Demkowicz14PRL,Zhou24PRAachieving}.

Our results have other surprising implications not covered in the main text. In Ref.~\cite{SM_note}, we identify examples demonstrating that it is sometimes possible to achieve the HL, even when the conventional HNKS condition is ill-defined, or in the absence of decoherence-free subspace \cite{Lidar98PRL}.

Our approach is less resource demanding compared to the conventional QEC approach and well suited for experimental design. Compared with our previous works \cite{Altherr21PRL,Liu23PRL,Liu24AQT}, the obtained protocol is not strictly optimal, but it has a performance guarantee of the precision scaling. The complexity of the algorithm is much lower and does not grow with $N$. The simple setup allows the circuit to be looped many times, thus facilitating experimental demonstration of quantum metrology for large $N$.

A very recent work \cite{zhou2024limits} proved that the HL is unattainable with unitary control for single-qubit noisy channel estimation when the noise strength is nonvanishing. Interestingly, our work shows that this conclusion does not extend to higher dimensions. 

\emph{Acknowledgments}---This work is supported by the National Natural Science Foundation of China via Excellent Young Scientists Fund (Hong Kong and Macau) Project No.~12322516, 
Ministry of Science and Technology, China (MOST2030) with Grant No.\ 2023200300600, Guangdong Provincial Quantum Science Strategic Initiative (No.~GDZX2303007 and No.~GDZX2203001), and the Hong Kong Research Grant Council through the Early Career Scheme Grant No.~27310822 and the General Research Fund Grant No.~17303923.  

\emph{Data availability}---The data that support the findings of this article are openly available \cite{github}.

\section*{End Matter}
\emph{On robustness against protocol imperfections}---Here, we provide more analysis for the example presented in the main text, discussing the robustness of our approach with respect to the SPAM and control errors. Notably, the robustness against SPAM errors arises from the sequential nature of our protocol and is a general feature not limited to the specific example discussed. More precisely, the probe state given by Theorem~\ref{thm:condition for HL} provides the flexibility to choose parameters $\alpha$ and $\beta$, making it robust to state preparation error. Similar arguments hold for robustness against measurement error.

We model the SPAM error $p_{\mathrm{SPAM}}$ as single-qubit depolarizing noise described by the channel $\mathcal N^{(\mathrm{dep})}(\rho) = (1-p_{\mathrm{SPAM}})\rho + (p_{\mathrm{SPAM}}/2)I$. Specifically, in the two-qubit example, $\mathcal N^{(\mathrm{dep})} \otimes \mathcal N^{(\mathrm{dep})}$ occurs at both the state preparation and measurement stages.

The control error is modeled as a classical Gaussian fluctuation in the control Hamiltonian, i.e., at each intervention step the actual control is $U_c = e^{-\mathrm i tH_c}$, where $H_c = -BH_0(\theta_0)$ and $B$ is a random variable drawn from the normal distribution with mean $\mu=1.0$ and variance $\sigma_{\mathrm{control}}^2$. We assume that $B$ is independent and identically distributed (i.i.d.) across different control steps.

We plot the normalized output QFI $F^Q(\rho_\theta)/N$ versus the number of queries $N$ in Fig.~\ref{fig:HL robustness}. We take $\theta_0 = 1.0$, $t=1.0$, and $p_1=p_2=0.1$. Ideally, without SPAM and control errors, we prepare the input state $\rho_0$ given by Eq.~(\ref{eq:input state example HL unitary control}) for $\alpha=0.499$ and choose the unitary control $U_c$ given by Eq.~(\ref{eq:control example HL unitary control}), which yields the HL as shown by the solid gray line. We also verify that in the ideal case, the input state of the second qubit can be arbitrarily chosen without affecting the QFI, as predicted by the theory.

\begin{figure}[!htbp]
    \centering
    \includegraphics[width=0.48\textwidth]{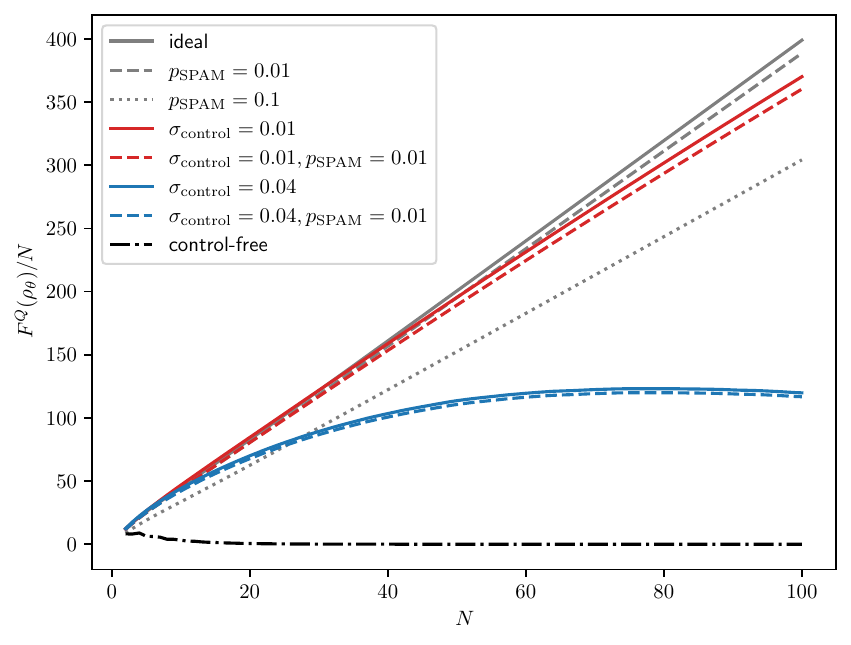}
    \caption{The normalized output QFI $F^Q(\rho_\theta)/N$ versus the number of queries $N$. The (expectation value of the) QFI is computed over $1000$ trials with random control errors. The gray lines represent different SPAM error rates ($p_{\mathrm{SPAM}}=0,0.01,0.1$) under ideal control ($\sigma_{\mathrm{control}}=0$). The red ($\sigma_{\mathrm{control}}=0.01$) and blue ($\sigma_{\mathrm{control}}=0.04$) lines show how the HL deviates as control error increases, with dashed lines corresponding to the presence of SPAM errors and solid lines indicating the absence of SPAM errors. The solid black line represents the performance of a control-free strategy.}
    \label{fig:HL robustness}
\end{figure}

In many typical experiments, the readout noise is dominant compared to gate noise \cite{Arute2019,Moses23PRX,Bluvstein2024,IBM_readout_note}. We find that the SPAM errors are not fatal for achieving the HL, indicated by dashed ($p_{\mathrm{SPAM}}=0.01$) and dotted ($p_{\mathrm{SPAM}}=0.1$) gray lines \footnote{The chosen values of $p_{\mathrm{SPAM}}$ are common in current quantum devices \cite{IBM_readout_note}}. This can be expected, since our sequential strategy repeatedly amplifies the signal and the SPAM noise has only a constant effect in the asymptotic limit.

Control errors are harder to tackle since they accumulate as $N$ grows. Nevertheless, when the classical error in control is small ($\sigma_{\mathrm{control}}=0.01$), we can still observe a QFI scaling close to the HL. The deviation from the HL becomes more obvious for larger fluctuations ($\sigma_{\mathrm{control}}=0.04$). 

As a benchmark, the dash-dotted black line depicts the performance of a control-free strategy in the ideal case without SPAM and control errors. The QFI vanishes for large $N$, as $T_{\theta_0}$ is a mixing channel, transforming a generic input state into the fixed point state.

Finally, we provide a simple analysis of the small perturbations in control. Let us denote the ideal control by $U_c^{(\mathrm{ideal})}=e^{\mathrm i t H_0(\theta_0)}$, and consider the actual control $U_c=e^{-\mathrm i t H_c}$ with perturbed Hamiltonian $H_c=-BH_0(\theta_0)$. Assume $B=1+\epsilon$ for small error $\epsilon$ (where $\epsilon$ and $\sigma_{\mathrm{control}}$ are within the same order of magnitude). We then have
\begin{equation} \label{eq:control error norm}
    \norm{U_c-U_c^{(\mathrm{ideal})}} \le \epsilon t \norm{H_0(\theta_0)} + O(\epsilon^2),
\end{equation}
where $\norm{\cdot}$ denotes the operator norm. Without control error, the ideal output state is $\kett{\rho_\theta^{(\mathrm{ideal})}} = \left(\hat T_\theta^{(\mathrm{ideal})}\right)^N \kett{\rho_0}$, where $\hat T_\theta^{(\mathrm{ideal})} := \left(U_c^{(\mathrm{ideal})} \otimes U_c^{(\mathrm{ideal})*}\right)T_\theta$. Similarly, the actual output state is $\kett{\rho_\theta} = \hat T_\theta^N \kett{\rho_0}$ for $\hat T_\theta := (U_c\otimes U_c^*)T_\theta$. From Eq.~(\ref{eq:control error norm}) we obtain (to the first order approximation)
\begin{equation}
    \begin{aligned}
        \norm{\hat T_\theta - \hat T_\theta^{(\mathrm{ideal})}} &\le 2\epsilon t \norm{H_0(\theta_0)}, \\
        \norm{\dot{\hat T}_\theta - \dot{\hat T}_\theta^{(\mathrm{ideal})}} &\le 4\epsilon t^2 \norm{H_0(\theta_0)}\norm{H_J},
    \end{aligned}
\end{equation}
having used $\norm{T_\theta}= 1$ for this noise model. By defining $\delta_1 := 2\epsilon t \norm{H_0(\theta_0)}$ and $\delta_2:=4\epsilon t^2 \norm{H_0(\theta_0)}\norm{H_J}$, we can bound the closeness of the output state to the ideal one:
\begin{equation}
    \begin{aligned}
        \norm{\rho_\theta-\rho_\theta^{(\mathrm{ideal})}}_F &\le N\delta_1 \norm{\rho_0}_F, \\
        \norm{\dot \rho_\theta-\dot \rho_\theta^{(\mathrm{ideal})}}_F &\le [N\delta_2 + 2N(N-1)\delta_1t\norm{H_J}]\norm{\rho_0}_F,
    \end{aligned}
\end{equation}
where $\norm{\cdot}_F$ denotes the Frobenius norm. By the continuity of QFI \cite[Theorem~1]{Rezakhani19PRA}, we can bound the relative deviation of the QFI from the ideal case:
\begin{equation}
    \begin{aligned}
        &\mathrel{\phantom{\le}} \frac{1}{N^2}\left\lvert F_Q(\rho_\theta)-F_Q\left(\rho_\theta^{(\mathrm{ideal})}\right)\right\rvert \\
        &\le K_1 \norm{\rho_\theta-\rho_\theta^{(\mathrm{ideal})}}_1+ \frac{K_2}{N} \norm{\dot \rho_\theta-\dot \rho_\theta^{(\mathrm{ideal})}}_1 \\
        &\le K\sqrt{d} \norm{\rho_0}_F\{[N+2(N-1)t\norm{H_J}]\delta_1 + \delta_2\}
    \end{aligned}
\end{equation}
for constants $K_1$ and $K_2$ and $K:=\max\{K_1,K_2\}$ (see Ref.~\cite{Rezakhani19PRA} for the detailed definitions of these constants; note that here they are bounded as $N$ grows). Here, the trace norm $\norm{M}_1 \le \sqrt{d}\norm{M}_F$ for a $d\times d$ matrix $M$. 
Therefore, a sufficient condition for the quadratic (i.e., ``HL"-like) dependence of the QFI on the number of queries $N$, as shown in Fig.~\ref{fig:HL robustness}, is that $N$ is small compared with $1/\epsilon$.

\providecommand{\noopsort}[1]{}\providecommand{\singleletter}[1]{#1}%
%


\onecolumngrid
\section*{Supplemental Material}

\appendix

Section \ref{sec:preliminaries} introduces the preliminaries on the vectorization technique and the spectral properties of quantum channels, which turn out to be useful throughout the proof. Section \ref{sec:proof lemma associated QFI} contains the proof of Lemma \ref{lem:associated QFI} for a closed-form expression of the associated QFI. Section \ref{sec:proof lemma asymptotic associated QFI} contains the proof of Lemma \ref{lem:asymptotic associated QFI} for connecting the output associated QFI to the spectral properties of a quantum channel in the asymptotic limit. In Sections \ref{sec:proof theorem condition for HL} and \ref{sec:proof theorem HL with unitary control} we prove the conditions for achieving the ancilla-free HL without control and with unitary control, respectively summarized in Theorems~\ref{thm:condition for HL} and \ref{thm:HL with unitary control}. A detailed description and proof of the algorithm for unitary control is given in Section \ref{sec:alg for control}. In Section \ref{sec:proof HL Pauli noise} we prove conditions for achieving the HL in parameter estimation under Pauli noise. Finally, Section \ref{sec:supplemental examples} presents supplemental examples demonstrating the singularity of HNKS condition and the attainablity of the HL in absence of DFS.

\section{Preliminaries} \label{sec:preliminaries}
\subsection{Vectorization}
For a linear operator $A$ on a $d$-dimensional Hilbert space $\mathcal H$, we denote its vectorization as 
\begin{equation}
    \kett{A} := \sum_{ij}\mel{i}{A}{j}\ket{i}\ket{j},
\end{equation}
where $\{\ket{i}\in\mathcal H\}_i$ and $\{\ket{j}\in\mathcal H\}_j$ represent the computational basis. It is easy to check that 
\begin{equation}
    \kett{A}=(A\otimes I) \ket{\Phi}
\end{equation}
for the (unnormalized) maximally entangled state $\ket{\Phi}=\sum_i\ket{i}\ket{i}$. We denote the adjoint of $\kett{A}$ as $\bbra{A}:=\kett{A}^\dagger$ and the complex conjugate of $\kett{A}$ as $\kett{A^*}$.

We present several properties that are straightforward to obtain. Given an operator $M=\sum_k\ketbra{u_k}{v_k}$ for all $\ket{u_k},\ket{v_k} \in \mathcal H$, we have
\begin{equation}
    \kett{M} = \sum_k \ket{u_k} \otimes \ket{v_k^*}.
\end{equation}

The Hilbert-Schmidt inner product between two operators $A$ and $B$ is preserved by vectorization, i.e., 
\begin{equation}
    \Tr\left(A^\dagger B\right) = \bbrakett{A}{B}.
\end{equation}

A particularly useful identity we will use throughout the proofs is 
\begin{equation} \label{eq:identity in vec}
    \kett{ABC} = \left(A \otimes C^T\right) \kett{B},
\end{equation}
for any operators $A$, $B$ and $C$.

Interested readers can refer to Refs.~\cite{watrous2018theory,Alipour14PRL} for more details pertaining to the operator-vector correspondence.

\subsection{Spectral properties of quantum channels}
By vectorizing the density operator $\rho$ of a $d$-dimensional quantum state into $\kett{\rho}$, the action of a quantum channel $\mathcal E (\rho) = \sum_{i=1}^r K_i \rho K_i^\dagger$ can be expressed as
\begin{equation}
    \kett{\mathcal E (\rho)} = \sum_{i=1}^r\left(K_i\otimes K_i^*\right)\kett{\rho},
\end{equation}
having used the identity Eq.~(\ref{eq:identity in vec}). Hence, we can define a $d^2 \times d^2$ transition matrix
\begin{equation}
    T:=\sum_{i=1}^r K_i \otimes K_i^*.
\end{equation}
to represent the quantum channel, and its action is given by $\kett{\mathcal E(\rho)}=T\kett{\rho}$.

As $T$ is associated with a CPTP map, it has many useful properties \cite{wolf2012quantum,Burgarth_2013,watrous2018theory}, and we will review some relevant ones here. Specifically, we consider quantum channels with equal input and output spaces, and assign a spectrum $\{\lambda_i\}_i$ to $T$ such that $T\kett{R_i}=\lambda_i \kett{R_i}$. For clarity we call $\{\kett{R_i}\}_i$ the eigenvectors and $\{R_i\}_i$ the eigenmatrices. Note that $T$ is not necessarily diagonalizable and admits up to $d^2$ eigenvalues.
\begin{enumerate}
    \item The eigenvalues of $T$ are either real or come in complex conjugate pairs, i.e., if $\lambda_i$ is an eigenvalue of $T$, then $\lambda_i^*$ is also an eigenvalue of $T$. In fact, we have $T\kett{R_i}=\lambda_i \kett{R_i}$ and $T\kett{R_i^\dagger}=\lambda_i^* \kett{R_i^\dagger}$. If $\lambda_i$ is real, we can take $R_i$ to be Hermitian.
    \item All the eigenvalues of $T$ satisfy $|\lambda_i| \le 1$, confined in the unit circle on the complex plane. Those eigenvalues satisfying $|\lambda_i|=1$ are called \emph{peripheral eigenvalues}. We denote the set of indices of peripheral eigenvalues by $\mathsf S:=\{i \mid |\lambda_i| = 1\}$.
    \item All Jordan blocks for peripheral eigenvalues are one-dimensional \cite[Proposition 6.2]{wolf2012quantum}.
    \item Any quantum channel must have at least one \emph{fixed point state}, i.e., there exists a density operator $\rho_*$ such that $T\kett{\rho_*} = \kett{\rho_*}$.
    \item A quantum channel is said to be \emph{ergodic} if there exists a unique fixed point state $\rho_*$. This requirement is strong enough to guarantee that $T$ has only one \emph{fixed point}, i.e., eigenvector associated with unit eigenvalue $\lambda=1$.
    \item A quantum channel $\mathcal E$ is said to be \emph{mixing} if there exists a fixed point state $\rho_*$ such that
    \begin{equation}
        \lim_{N\rightarrow \infty} \lVert \mathcal E^N(\rho) - \rho_*\rVert = 0,\ \forall \rho\ \text{as a density operator}.
    \end{equation}
    A channel is mixing iff it is ergodic and does not have any other peripheral eigenvalue except for $\lambda=1$.
    \item As $T$ is a trace-preserving map, we have $\Tr R_j = 0$ if $\lambda_j \neq 1$.
\end{enumerate}

\emph{Remark}---Our framework establishes a close relation between the metrological limit and the spectral properties of quantum
channels. It thus provides a new perspective on quantum metrology drawn from the asymptotic theory of quantum
channels, where researchers have shown great interest in quantum ergodicity and mixing \cite{Evans1978Spectral,wolf2012quantum,Burgarth_2013,Albert2019asymptoticsof,Neill2016,Movassagh21PRX}. We expect that our results may also have applications in these areas beyond quantum metrology.

\section{Proof of Lemma \ref{lem:associated QFI}} \label{sec:proof lemma associated QFI}
\begin{proof}
    By noting that $\tilde \rho_\theta^2=\tilde \rho_\theta$, we can take the SLD $L_{\tilde \rho_\theta}=2\dot{\tilde{\rho}}_\theta$. It follows that    
    \begin{equation}
        \begin{aligned}
            L_{\tilde \rho_\theta} &= 2\left\{\frac{\kett{\dot \rho_\theta}\bbra{\rho_\theta}}{\Tr\left(\rho_\theta^2\right)} + \frac{\kett{\rho_\theta}\bbra{\dot \rho_\theta}}{\Tr\left(\rho_\theta^2\right)} - \frac{d\Tr\left(\rho_\theta^2\right)/d\theta}{\left[\Tr\left(\rho_\theta^2\right)\right]^2} \kett{\rho_\theta}\bbra{\rho_\theta}\right\}\\
            &=\frac{2}{\Tr\left(\rho_\theta^2\right)}\left(\kett{\dot \rho_\theta}\bbra{\rho_\theta} + \kett{\rho_\theta}\bbra{\dot \rho_\theta} - 2\bbrakett{\rho_\theta}{\dot\rho_\theta} \tilde \rho_\theta\right),
        \end{aligned}
    \end{equation}
    having used $d\Tr\left(\rho_\theta^2\right)/d\theta=2\Tr\left(\rho_\theta \dot \rho_\theta\right)=2\bbrakett{\rho_\theta}{\dot\rho_\theta}$. Then we have
    \begin{equation}
            L_{\tilde \rho_\theta}^2 = \frac{4}{\left[\Tr\left(\rho_\theta^2\right)\right]^2}\left[\Tr\left(\rho_\theta^2\right)\bbrakett{\dot \rho_\theta}{\dot \rho_\theta}\tilde \rho_\theta - \bbrakett{\rho_\theta}{\dot \rho_\theta}\left(\kett{\dot \rho_\theta}\bbra{\rho_\theta} + \kett{\rho_\theta}\bbra{\dot \rho_\theta}\right) + \Tr\left(\rho_\theta^2\right)\kett{\dot \rho_\theta}\bbra{\dot \rho_\theta}\right],
    \end{equation}
    which yields
    \begin{equation}
            \tilde F^Q \left(\tilde \rho_\theta\right) = \Tr\left(\tilde \rho_\theta L_{\tilde \rho_\theta}^2\right) =  4 \left\{\frac{\bbrakett{\dot \rho_\theta}{\dot \rho_\theta}}{\Tr\left(\rho_\theta^2\right)} - \left[\frac{\bbrakett{\rho_\theta}{\dot \rho_\theta}}{\Tr\left(\rho_\theta^2\right)}\right]^2\right\}.
    \end{equation}
\end{proof}

\section{Proof of Lemma \ref{lem:asymptotic associated QFI}} \label{sec:proof lemma asymptotic associated QFI}
\begin{proof}
    In general $T_\theta$ may or may not be diagonalizable. If $T_\theta$ is not diagonalizable, we can add some additional linearly independent vectors $\kett{R_i}$ orthogonal to the subspace of peripheral eigenvectors, such that $\{\kett{R_i}\}_{i=1}^{d^2}$ forms a Jordan basis which spans the whole $d^2$-dimensional vector space. Under such choice we have the decomposition $\kett{\rho_0}=\sum_{i=1}^{d^2}a_{i}\kett{R_i}$. Without loss of generality, we can always choose $a_i$ to be nonnegative real numbers and $\bbrakett{R_i}{R_i}=1$ for all $i$. By our choice $R_1=\frac{1}{\sqrt{\Tr\left(\rho_*^2\right)}}\rho_*$ for a fixed point state $\rho_*$, and $a_1=1/\Tr R_1$. Then we have 
    \begin{equation}
        \kett{\rho_\theta}=T_\theta^N\kett{\rho_0}=\sum_{i\in \mathsf{S}}a_i\lambda_i^N\kett{R_i},\ N\rightarrow\infty,
    \end{equation}
    since the $N$-th power of any Jordan block corresponding to $|\lambda_i|<1$ converges to 0 as $N\rightarrow \infty$. It follows that
    \begin{equation} \label{eq:derivative rho theta}
        \kett{\dot \rho_\theta}=\sum_{i\in \mathsf{S}}\left(N a_i\lambda_i^{N-1}\dot \lambda_i \kett{R_i} + \dot a_i\lambda_i^N\kett{R_i} + a_i\lambda_i^N\kett{\dot R_i}\right).
    \end{equation}
    Thus we obtain
    \begin{equation}
        \begin{aligned}
            \bbrakett{\dot \rho_\theta}{\dot \rho_\theta} &= N^2 \sum_{i,j\in\mathsf S}(\lambda_i^*\lambda_j)^{N-1} a_ia_j\dot{\lambda}_i^*\dot{\lambda}_j\bbrakett{R_i}{R_j} \\
            &+ 2N \sum_{i,j\in\mathsf S}(\lambda_i^*\lambda_j)^{N-1}\lambda_i^*\dot{\lambda}_j\left(\dot a_ia_j\bbrakett{R_i}{R_j} + a_ia_j\bbrakett{\dot R_i}{R_j}\right) \\
            &+ \sum_{i,j\in\mathsf S}(\lambda_i^*\lambda_j)^N\left(\dot a_i \dot a_j\bbrakett{R_i}{R_j} + a_ia_j\bbrakett{\dot R_i}{\dot R_j} + \dot a_ia_j\bbrakett{R_i}{\dot R_j} + a_i\dot a_j\bbrakett{\dot R_i}{R_j} \right) ,
        \end{aligned}
    \end{equation}
    and
    \begin{equation}
        \begin{aligned}
            \bbrakett{\rho_\theta}{\dot \rho_\theta} &= N \sum_{i,j\in\mathsf S}(\lambda_i^*\lambda_j)^{N-1}a_ia_j\lambda_i^*\dot{\lambda}_j\bbrakett{R_i}{R_j} + \sum_{i,j\in\mathsf S}(\lambda_i^*\lambda_j)^N\left(a_i\dot a_j\bbrakett{R_i}{R_j} + a_ia_j\bbrakett{R_i}{\dot R_j}\right),
        \end{aligned}
    \end{equation}
    having used the property that eigenvalues come in complex conjugate pairs. By using Lemma \ref{lem:associated QFI} and the relation $\Tr\left(\rho_\theta^2\right)=\bbrakett{\rho_\theta}{\rho_\theta}=\sum_{i,j\in\mathsf S}(\lambda_i^*\lambda_j)^Na_ia_j\bbrakett{R_i}{R_j}$, we then obtain Eq.~(\ref{eq:asymptotic associated QFI}).
\end{proof}

\section{Proof of Theorem~\ref{thm:condition for HL}} \label{sec:proof theorem condition for HL}
\begin{proof}
Define 
\begin{equation}
    \kett{D_\theta}:=\sum_{i\in \mathsf{S}}N a_i\lambda_i^{N-1}\dot \lambda_i \kett{R_i},
\end{equation}
which constitutes a part of the contribution to $\kett{\dot \rho_\theta}$ in Eq.~(\ref{eq:derivative rho theta}). Then by Lemma \ref{lem:asymptotic associated QFI} it follows that
\begin{equation}
    \lim_{N\rightarrow\infty}\frac{\tilde F^Q \left(\tilde \rho_\theta\right)}{N^2} =  \frac{4}{N^2} \left\{\frac{\bbrakett{D_\theta}{D_\theta}}{\Tr\left(\rho_\theta^2\right)} - \left[\frac{\bbrakett{\rho_\theta}{D_\theta}}{\Tr\left(\rho_\theta^2\right)}\right]^2\right\}.
\end{equation}
By the Cauchy–Schwarz inequality and noting that the QFI is a real number, we have
\begin{equation}
    \bbrakett{\rho_\theta}{D_\theta}^2 \le \bbrakett{\rho_\theta}{\rho_\theta} \bbrakett{D_\theta}{D_\theta} = \Tr(\rho_\theta^2)\bbrakett{D_\theta}{D_\theta},
\end{equation}
where the equality holds if and only if $\kett{D_\theta}\propto\kett{\rho_\theta}$. Equivalently this condition can be formulated as $\dot \lambda_i/\lambda_i = \dot \lambda_j/\lambda_j$
for all $i,j\in \mathsf S$ associated with $a_i \ne 0$ and $a_j \ne 0$. When the condition is violated, we can obtain $\lim_{N\rightarrow\infty}\frac{\tilde F^Q \left(\tilde \rho_\theta\right)}{N^2} >0$ that yields the HL.

Now we denote the fixed point state by $\rho_*$ with the eigenvalue $\lambda_1 = 1$. If there exists some $\lambda_j$ such that $|\lambda_j| = 1$ and $\dot \lambda_j\neq 0$, we will see that it is always possible to choose an input state with $a_{1} \neq 0$ and $a_{j} \neq 0$. By the trace-preserving property of the quantum channel we have $\lim_{\theta\rightarrow\theta_0}\Tr R_j=0$, so we can take an input state $\rho_0=(1-\alpha)I/d + \alpha \rho_* +\beta\left(R_j + R_j^\dagger\right) $ for some $\alpha >0$ and $\beta >0$. Note that if $R_j$ is proportional to a Hermitian matrix, we have the freedom to simply take $R_j$ to be Hermitian. Since $\dot \lambda_1=0$ and $\dot \lambda_j \neq 0$, this choice of the input state allows for the HL.
\end{proof}

\section{Proof of Theorem~\ref{thm:HL with unitary control}} \label{sec:proof theorem HL with unitary control}
\begin{proof}
    Our goal is to show that the regulated channel $(U_c\otimes U_c^*) T_\theta$ has a peripheral eigenmatrix $R_0$ and the associated eigenvalue has a nonzero derivative in the limit of $\theta\rightarrow\theta_0$, even if this may not hold for $T_\theta$ itself.
    
    First we will show that $\kett{R_0}\in\mathcal P$ is the limit of a peripheral eigenvector of $(U_c \otimes U_c^*)T_\theta$. Denoting by $\mu_0$ the eigenvalue of $T_\theta^\dagger\dot T_\theta$ associated with $R_0$, we have
    \begin{equation}
        \mu_0\kett{R_0} = \lim_{\theta\rightarrow\theta_0}T_{\theta}^\dagger \dot T_\theta \kett{R_0}= T_{\theta_0}^\dagger \lim_{d\theta\rightarrow 0} \frac{T_{\theta_0+d\theta}-T_{\theta_0}}{d\theta}\kett{R_0}=T_{\theta_0}^\dagger(U_c^\dagger \otimes U_c^T)(U_c \otimes U_c^*) \lim_{d\theta\rightarrow 0} \frac{T_{\theta_0+d\theta}-T_{\theta_0}}{d\theta}\kett{R_0},
    \end{equation}
    so for $d\theta\rightarrow 0$ we obtain $(U_c\otimes U_c^*)T_{\theta_0+d\theta} \kett{R_0} \propto \kett{R_0}$, by noting that $T_{\theta_0}^\dagger (U_c^\dagger \otimes U_c^T)\kett{R_0} = \kett{R_0}$ and $(U_c \otimes U_c^*)T_{\theta_0}\kett{R_0} = \kett{R_0}$. Indeed, $R_0$ is the limit of a peripheral eigenmatrix of $(U_c\otimes U_c^*) T_{\theta}$. 
    
    Then we will show that for any peripheral eigenvalue $\lambda_j$ of $(U_c\otimes U_c^*) T_\theta$ with the normalized eigenvector $\kett{R_j}$, we have $\lambda_j^*\dot \lambda_j = \bbra{R_j} T_\theta^\dagger \dot T_\theta \kett{R_j}$. By taking derivatives on both sides of $(U_c\otimes U_c^*)T_\theta\kett{R_j}=\lambda_j\kett{R_j}$ we obtain
    \begin{equation}
        (U_c\otimes U_c^*)\dot T_\theta\kett{R_j} + (U_c\otimes U_c^*)T_\theta \kett{\dot R_j} =\dot \lambda_j \kett{R_j} + \lambda_j \kett{\dot R_j}.
    \end{equation}
    It follows that
    \begin{equation}
        \bbra{R_j} T_\theta^\dagger \dot T_\theta\kett{R_j} + \bbra{R_j} T_\theta^\dagger T_\theta \kett{\dot R_j} =\bbra{R_j} \lambda_j^* \dot \lambda_j \kett{R_j} + \bbra{R_j}\lambda_j^* \lambda_j \kett{\dot R_j}.
    \end{equation}
    Note that $\bbra{R_j}T_\theta^\dagger T_\theta = \bbra{R_j}$ since $\kett{R_j} \in \mathcal P$. Then we obtain $\lambda_j^*\dot \lambda_j = \bbra{R_j} T_\theta^\dagger \dot T_\theta \kett{R_j}$.

    $P T_\theta^\dagger\dot T_\theta P\neq 0$ thus implies the existence of some $\dot \lambda_j \neq 0$. In particular, $R_0$ is such an eigenmatrix of $P T_\theta^\dagger\dot T_\theta P$ with eigenvalue $\mu_0=\lambda_0^*\dot \lambda_0$. Finally, by Theorem~\ref{thm:condition for HL} the repeated application of $(U_c\otimes U_c^*) T_\theta$ can achieve the HL.
\end{proof}

\section{Algorithm for identification of the unitary control} 
\label{sec:alg for control}
In Algorithm~\ref{alg:unitary control}, we delineate in detail the procedure for identifying the unitary control in Theorem~\ref{thm:HL with unitary control} for achieving the HL. The code implementation is openly available on GitHub \cite{github}.

\SetKwComment{Comment}{/* }{ */}
\begin{algorithm}[!htbp] \label{alg:unitary control}
\caption{Identification of a unitary control $U_c$ in Theorem~\ref{thm:HL with unitary control}.}
    \DontPrintSemicolon
    \KwIn{$R_0$ in Eq.~(\ref{eq:unitary equivalence})} 
    \KwOut{Unitary control $U_c$}
    $R_1 \gets R_0+R_0^\dagger,\ R_2 \gets \mathrm i(R_0-R_0^\dagger)$ \\
    $\mathrm{List_{in}} \gets [R_1,R_2],\ \mathrm{List_{out}} \gets [\sum_kK_k^{\theta_0}R_1K_k^{\theta_0\dagger},\sum_kK_k^{\theta_0}R_2K_k^{\theta_0\dagger}]$ \\
    $\mathrm{Vectors_{in-out}}\gets[]$ \Comment*[r]{To store two sets of vectors connected by a unitary transformation $U_c^\dagger$}
    \For{$\mathrm{List}\ \mathbf{in}\ [\mathrm{List_{in}}, \mathrm{List_{out}}]$}{
    \For{$j\gets 1$ \KwTo $2$}
    {$R\gets \mathrm{List}[j]$ \\
    Spectral decomposition $R=\sum_{k=1}^{n_\Pi} \alpha_k \Pi_k$, where $\Pi_k=\sum_{a=1}^{n_v^{(k)}}\dyad{v_{ka}}$  \Comment*[r]{$\{\Pi_k\}_k$ is a set of projections; $\alpha_k$ are different for different $k$ and follow the descending order}
    $S_k \gets \left[\ket{v_{k1}},\dots,\ket{v_{kn_v^{(k)}}}\right],\ \forall k=1,\dots,n_\Pi$ \Comment*[r]{$S_k$ is a subspace spanned by orthonormal $\{\ket{v_{ka}}\}_a$}
    $\mathcal S_j\gets[S_1,\dots,S_{n_\Pi}]$ \Comment*[r]{$\mathcal S_j$ is a list of subspaces corresponding to $R$}
    }
    $\mathcal S \gets \bigcup_{j=1}^m\mathcal S_j$ \Comment*[r]{$\mathcal S$ is the union of all subspaces}
    \Repeat{$\mathrm{for\ every\ pair\ of\ subspaces}\ S_a, S_b\ \mathrm{in}\ \mathcal S\ \mathrm{with\ projections}\ \Pi_a,\Pi_b,\ \mathrm{either}\ \Pi_a\Pi_b=0,\ \mathrm{or}\ \Pi_a\ \mathrm{and}\ \Pi_b\ \mathrm{have\ the\ same\ rank}\ \mathrm{and\ all\ nonzero\ singular\ values\ of}\ \Pi_a\Pi_b\ \mathrm{are\ the\ same}$}
    {
    \For{$\mathrm{every\ pair\ of}\ S_x, S_y\ \mathbf{in}\ \mathcal S$}
    {$\Pi_x,\Pi_y$ are the projections on $S_x, S_y$\\
    Compute the SVD of $\Pi_x\Pi_y=UDV^\dagger$\\
    Obtain a list of subspaces $[X_1,\dots,X_{n_s}]$, each spanned by the column vectors of $U$ corresponding to the same \emph{nonzero} singular value in the ascending order \\
    Obtain a list of subspaces $[Y_1,\dots,Y_{n_s-1}]$, each spanned by the row vectors of $V^T$ corresponding to the same \emph{nonzero and nonunity} singular value in the ascending order \\
    $X_{{n_s+1}}$ is the orthogonal complement to $X_1,\dots,X_{n_s}$ in $S_x$ and $Y_{{n_s}}$ is the orthogonal complement to $Y_1,\dots,Y_{n_s-1}$ in $S_y$ \\
    $\mathcal S_{xy}\gets[X_1,\dots,X_{n_s+1},Y_1,\dots,Y_{n_s}]$ \Comment*[r]{The angles between $S_x,S_y$ are uniquely mapped to angles between subspaces in $\mathcal S_{xy}$}
    }
    $\mathcal S \gets \bigcup_{x,y} \mathcal S_{xy}$ \\
    Remove redundant subspaces in $\mathcal S$ if they are associated with identical projections
    }
    $\mathcal S' \gets []$ \Comment*[r]{$\mathcal S'$ is to store subspaces with basis vectors in a uniquely determined canonical order}
    \Repeat{$\mathcal S\ \mathrm{is\ empty}$}
    {Add the first subspace $S_1$ in $\mathcal S$ to $\mathcal S'$ and remove $S_1$ from $\mathcal S$ \\
    Find all the remaining subspaces $Q_1,\dots,Q_L$ in $\mathcal S$ which are not orthogonal to $S_1$, and remove them from $\mathcal S$\\
    \For{$j\gets 1$ \KwTo $L$}
    {$\Pi_1,\Gamma_j$ are rank-$n_j$ projections on $S_1,Q_j$\\
    Construct isometry $\tilde U$ with each column as a basis vector in subspace $S_1$\\
    $\tilde{V}^\dagger \gets \tilde{D}^{-1}\tilde U^\dagger\Pi_1\Gamma_j$, where $\tilde{D}$ is diagonal with all diagonal elements as the nonzero singular value of $\Pi_1\Gamma_j$\\  
    $Q_j \gets [\ket{v_1},\dots,\ket{v_{n_j}}]$, where $\ket{v_1},\dots, \ket{v_{n_j}}$ are the row vectors of $\tilde{V}^T$\\
    Add $Q_j$ to $\mathcal S'$ and remove $Q_j$ from $\mathcal S$
    }
    }
    $\mathrm{List_{vectors}} \gets []$\\
    Add all the basis vectors of all the subspaces in $\mathcal S'$ to $\mathrm{List_{vectors}}$, following the order established so far\\
    Add $\mathrm{List_{vectors}}$ to $\mathrm{Vectors_{in-out}}$
    }
    For $i=1,2$, construct matrices $M_i$, with each column as a basis vector in $\mathrm{Vectors_{in-out}}[i]$ following the established order \\
    Compute the SVD of $M_2M_1^\dagger = U'D'V^{\prime\dagger}$\\
    $U_c\gets V'U^{\prime\dagger}$ \Comment*[r]{$U_c^\dagger$ maps vectors in $\mathrm{Vectors_{in-out}}[1]$ to vectors in $\mathrm{Vectors_{in-out}}[2]$}
    \textbf{Sanity check:} If Eq.~(\ref{eq:unitary equivalence}) is satisfied by $U_c$, output $U_c$ and succeed; otherwise, there does not exist a unitary satisfying Eq.~(\ref{eq:unitary equivalence}).
\end{algorithm}
\begin{proof}
    First, it is easy to see that the existence of a unitary transformation between two sets of Hermitian matrices is equivalent to the existence of a unitary transformation between two sets of subspaces, if we apply the spectral decomposition to each Hermitian matrix. Denote by $\{P_1,\dots,P_K\}$ the subspaces generated by the spectral decompostion of all matrices in $\{R_i\}_{i=1}^2$, and by $\{Q_1,\dots,Q_K\}$ the subspaces generated by the spectral decompostion of all matrices in $\{\sum_kK_k^{\theta_0}R_iK_k^{\theta_0\dagger}\}_{i=1}^2$.
    
    We start from the characterization of the principal angles between two subspaces $F$ and $G$. It is always possible to choose a set of orthonormal basis vectors $V_F=\{\ket{u_1},\dots,\ket{u_\alpha},\ket{v_1},\dots,\ket{v_\beta},\ket{\psi_1},\dots,\ket{\psi_\gamma}\}$ for $F$ and a set of orthonormal basis vectors $V_G=\{\ket{u_1'},\dots,\ket{u_\alpha'},\ket{w_1},\dots,\ket{w_\beta},\ket{\phi_1},\dots,\ket{\phi_\delta}\}$ for $G$, such that any vector in the set $V_F$ and any vector in the set $V_G$ are othogonal, except for 
    \begin{equation}
        \braket{v_i}{w_i} = \cos(\eta_{i}),\ \forall i=1,\dots,\beta
    \end{equation}
    and
    \begin{equation}
        \braket{u_i}{u_i'} = 1,\ \forall i=1,\dots,\alpha,
    \end{equation}
where $0<\eta_i<\pi/2$ are nontrivial principal angles. We say this is a canonical order of basis vectors for these two subspaces $F$ and $G$. These principal angles can be determined by performing the SVD of $\Pi_F\Pi_G$, as $\cos(\eta_j)$ is the singular value. By the uniqueness of SVD, $\{\ket{u_i}=\ket{u_i'}\}_{i=1}^\alpha$ are unique up to a unitary transformation, $\{\ket{v_j}\}$ and $\{\ket{w_j}\}$ associated with the same $\eta_j$ are unique up to a simultaneous unitary transformation, $\{\psi_i\}_{i=1}^\gamma$ are unique up to a unitary transformation, and $\{\phi_i\}_{i=1}^\delta$ are unique up to a unitary transformation. The geometrical relation between two subspaces $F$ and $G$ are thus fully characterized by the angles between a set of lower-dimensional subspaces.

Iteratively, we repeat the procedure of reducing the dimensions of subspaces, for all pairs of subspaces in the set $\{P_1,\dots,P_K\}$, until any pair of the dimension-reduced subspaces are either mutually orthogonal or have the same dimensions and identical principal angles. Denote by $\{\tilde{P}_1,\dots,\tilde{P}_M\}$ this set of subspaces iteratively obtained so far. We then fix a canonical order of basis vectors for all these subspaces through such a procedure: we first choose an order of basis vectors for subspace $\tilde{P}_1$ with the associated projection $\tilde{\Pi}_1$. Then for any subspace $\tilde{P}_j$ (with the projection $\tilde{\Pi}_j$) in $\{\tilde{P}_2,\dots,\tilde{P}_M\}$ which is nonorthogonal to $\tilde{P}_1$, the canonical order of basis vectors is uniquely determined by fixing the isometry $\tilde U$ in the SVD of $\tilde{P}_1\tilde{\Pi}_j=\tilde{U}\tilde{D}\tilde{V}^\dagger$, where the diagonal elements of $\tilde{D}$ only contain nonzero singular values. We can thus choose $\tilde{V}^\dagger=\tilde{D}^{-1}\tilde{U}^\dagger\tilde{P}_1\tilde{\Pi}_j$, where each column of $\tilde U$ is the predetermined canonical basis vector for subspace $\tilde{\Pi}_1$. After finding all subspaces $P_j$ nonorthogonal to $P_1$, we simply fix the order of basis vectors for another subspace in the remaining set of unchosen subspaces and repeat the process. Therefore, we find a set of basis vectors uniquely determined, and we can construct a matrix $M_1$, whose columns are these basis vectors in this canonical order. In the same way, we can also reduce the dimensions for subspaces in $\{Q_1,\dots,Q_K\}$, find a corresponding set of basis vectors and construct matrix $M_2$. 

Finally, we would like to find a unitary $U_c$ such that $U_c^\dagger M_1=M_2$. This can be easily done, by computing the SVD of $M_2M_1^\dagger = U'D'V^{\prime\dagger}$ and choose $U_c=V'U^{\prime\dagger}$. One way to see this is that $U_c^\dagger$ is the solution to the unitary Procrustes problem $\argmin_U\lVert UM_1-M_2\rVert_2^2$ \cite{Gower04Procrustes}. Such $U_C$ exists if and only if the inner products between column vectors of $M_1$ are identical to the inner products between column vectors of $M_2$. Tracing back the iterative reduction and considering the uniqueness of the canonical choice, if these inner products are preserved, then $U_c^\dagger P_iU_c=Q_i,\ \forall i=1,\dots,K$, and therefore $U_c$ is a solution of Eq.~(\ref{eq:unitary equivalence}). In turn, if a unitary satisfying Eq.~(\ref{eq:unitary equivalence}) exists, it must preserve all these inner products by construction. Therefore, $U_c$ satisfying Eq.~(\ref{eq:unitary equivalence}) exists if and only if Algorithm~\ref{alg:unitary control} can successfully identify it.
\end{proof}

\section{Proof of Theorem~\ref{thm:HL Pauli noise}} \label{sec:proof HL Pauli noise}
\begin{proof}
    Without loss of generality, we consider the case where $U_{\theta_0}=V_{\theta_0}=I$; otherwise, a unitary control $U_{\theta_0}^\dagger V_{\theta_0}^\dagger$ can be appended to $T_\theta$. We further choose a basis of linearly independent $n$-qubit Pauli operators (including the identity) $\{P_1',P_2',\dots,P_s'\}$ in $\mathsf{Span}\left\{\langle P_1, P_2, \dots, P_r\rangle\right\}$ for $s\ge r$. Obviously, we have $\mathsf{Span}\left\{\langle P_1, P_2, \dots, P_r\rangle\right\} = \mathsf{Span}\left\{P_1', P_2', \dots, P_s'\right\}$.
    
    As under our assumption $T_{\theta_0}$ is an $n$-qubit Pauli channel, the projection $P$ on the subspace $\mathcal P$ of eigenvectors of $T_\theta^\dagger T_\theta$ with eigenvalues $1$ can be expressed by
    \begin{equation}
        P=\prod_{i=1}^r\left[\frac{1}{2}\left(I + K_i^\theta \otimes K_i^{\theta*}\right)\right]=\prod_{i=1}^r\left[\frac{1}{2}\left(I + P_i \otimes P_i^{*}\right)\right]=\prod_{i=1}^s\left[\frac{1}{2}\left(I + P_i' \otimes P_i^{\prime*}\right)\right].
    \end{equation}
    It is worth mentioning that all $P_i' \otimes P_i^{\prime*}$ commute with each other, so the order in the product does not matter. Besides, in this case $\mathcal P$ is also the subspace of peripheral eigenvectors of $T_{\theta_0}$ itself, so the second condition Eq.~(\ref{eq:unitary equivalence}) in Theorem~\ref{thm:HL with unitary control} is naturally satisfied with trivial identity control $U_c=U_{\theta_0}^\dagger V_{\theta_0}^\dagger=I$.
    
    We now check the first condition Eq.~(\ref{eq:nonvanishing signal}) in Theorem~\ref{thm:HL with unitary control}, by evaluating (at $\theta=\theta_0$)
    \begin{equation} \label{eq:check condition signal not vanishing}
        \begin{aligned}
            P T_\theta^\dagger\dot T_\theta P &= -\mathrm i P T_\theta^\dagger T_\theta \left[H^{(U)} \otimes I - I \otimes H^{(U)*}\right]P  -\mathrm i P T_\theta^\dagger \left[H^{(V)} \otimes I - I \otimes H^{(V)*}\right]T_\theta P\\
            &= -\mathrm i \frac{1}{2^{2s}}\left(I + P_s' \otimes P_s^{\prime*}\right)\cdots\left(I + P_1' \otimes P_1^{\prime*}\right) T_\theta^\dagger T_\theta \left[H^{(U)} \otimes I - I \otimes H^{(U)*}\right] \left(I + P_1' \otimes P_1^{\prime*}\right) \cdots \left(I + P_s' \otimes P_s^{\prime*}\right)\\
            &-\mathrm i \frac{1}{2^{2s}}\left(I + P_s' \otimes P_s^{\prime*}\right)\cdots\left(I + P_1' \otimes P_1^{\prime*}\right) T_\theta^\dagger \left[H^{(V)} \otimes I - I \otimes H^{(V)*}\right] T_\theta  \left(I + P_1' \otimes P_1^{\prime*}\right) \cdots \left(I + P_s' \otimes P_s^{\prime*}\right)\\
            &= -\mathrm i \frac{1}{2^{2s}}\left(I + P_s' \otimes P_s^{\prime*}\right)\cdots\left(I + P_1' \otimes P_1^{\prime*}\right) (H_{\mathrm{tot}} \otimes I - I \otimes H_{\mathrm{tot}}^*) \left(I + P_1' \otimes P_1^{\prime*}\right)\cdots \left(I + P_s' \otimes P_s^{\prime*}\right)\\
            &=-\mathrm i \frac{1}{2^{2s}} \sum_k \alpha_k \left[\left(I + P_s' \otimes P_s^{\prime*}\right)\cdots\left(I + P_1' \otimes P_1^{\prime*}\right)   (Q_k \otimes I - I \otimes Q_k^*)\left(I + P_1' \otimes P_1^{\prime*}\right)\cdots \left(I + P_s' \otimes P_s^{\prime*}\right)\right],
        \end{aligned}
    \end{equation}
having used $H^{(U)}=\mathrm i U_\theta^\dagger \dot U_\theta$, $H^{(V)}=\mathrm i V_\theta^\dagger \dot V_\theta$ and $H_{\mathrm{tot}}=H^{(U)} + H^{(V)} = \sum_k \alpha_k Q_k$. For any $n$-qubit Pauli operator $P_i'$ and any $n$-qubit Pauli operator $Q_k$, if $\{Q_k,P_i'\}=0$, then 
\begin{equation} \label{eq:check condition each Q P}
    \left(I + P_i'\otimes P_i^{\prime*}\right)  (Q_k \otimes I - I \otimes Q_k^*)\left(I + P_i' \otimes P_i^{\prime*}\right)=0;
\end{equation}
if $Q_k=P_i'$, then Eq.~(\ref{eq:check condition each Q P}) also holds. However, if $[Q_k,P_i']=0$, then 
\begin{equation} \label{eq:check condition each Q P commute}
    \left(I + P_i'\otimes P_i^{\prime*}\right)  (Q_k \otimes I - I \otimes Q_k^*)\left(I + P_i' \otimes P_i^{\prime*}\right)=2Q_k \otimes I + 2Q_kP_i'\otimes P_i^{\prime*}-2I\otimes Q_k^*-2P_i'\otimes (Q_kP_i')^*.
\end{equation}
Note that, for any $i=1,\dots,s$, if $Q_k \notin \mathsf{Span}\left\{P_1', P_2', \dots, P_s'\right\}$, then $Q_kP_i' \notin \mathsf{Span}\left\{P_1', P_2', \dots, P_s'\right\}$. Combining it with Eq.~(\ref{eq:check condition each Q P commute}) and requiring $[Q_k, P_i']=0$ for $i=1,\dots,s$ (or equivalently, $[Q_k, P_i]=0$ for $i=1,\dots,r$), each term in the summation over $k$ in the last line of Eq.~(\ref{eq:check condition signal not vanishing}) does not vanish, and in particular, contains Pauli terms $Q_k\otimes I-I\otimes Q_k^*$ (up to a multiplicative factor). Furthermore, in the summation over different $k$ in Eq.~(\ref{eq:check condition signal not vanishing}), such Pauli terms cannot cancel each other as $\{Q_k\}$ are linearly independent. Therefore, we prove that $P T_\theta^\dagger\dot T_\theta P \neq 0$. Finally, by applying Theorem~\ref{thm:HL with unitary control} we complete the proof.
\end{proof}

\emph{Remark}---We remark that Eq.~(\ref{eq:Q not in span}) in Theorem~\ref{thm:HL Pauli noise} can be reformulated as $H_{\mathrm{tot}} \notin \mathsf{Span}\left\{\langle P_1, P_2, \dots, P_r\rangle\right\}$, resembling but stricter than the HNKS condition: $H:=\mathrm i\sum_{k=1}^r K_k^{\theta\dagger} \dot{K}_k^\theta \notin \mathsf{Span}\{K_i^{\theta\dagger} K_j^\theta,\ \forall i,j\}$ \cite{Zhou2021PRXQ}. To see this, we assume that $U_{\theta_0}=V_{\theta_0}=I$ for simplicity; otherwise, appropriate unitary control can be applied to counteract the effects of $U_\theta$ and $V_\theta$. At $\theta=\theta_0$, the ``Hamiltonian'' $H=\mathrm i\sum_{k=1}^r K_k^{\theta\dagger} \dot{K}_k^\theta=H^{(U)}+\sum_{i=1}^r p_iP_iH^{(V)}P_i$, and the ``Kraus span'' becomes a linear span of Pauli operators $\mathsf{Span}\{P_iP_j,\ \forall i,j\}$. Noting that, by Eq.~(\ref{eq:Q commutes}) of Theorem~\ref{thm:HL Pauli noise}, $H_{\mathrm{tot}}$ contains a Pauli term $Q_k$ that commutes with all $P_i$ for $i=1,\dots,r$, Eq.~(\ref{eq:Q not in span}) is equivalent to $H \notin \mathsf{Span}\left\{\langle P_1, P_2, \dots, P_r\rangle\right\}$. This indicates that the Hamiltonian lies outside not only the ``Kraus span'', but also the linear span of the group elements generated by the Pauli noise operators. 

\section{Supplemental examples} \label{sec:supplemental examples}
\subsection{The singularity of HNKS condition} \label{seq:singularity of HNKS}
We first consider a simple but illuminating example, which reveals some subtle issues not captured by the HNKS condition, but made explicit by our theoretical framework. In fact this issue arises from the ill-defined ``Hamiltonian'' in certain cases.

Suppose we want to estimate $\theta$ from $N$ queries to a $\sigma_z$-rotation under the single qubit dephasing noise, described by Kraus operators $K_1=\sqrt{1-p}e^{-\mathrm i\phi \sigma_z/2}$ and $K_2=\sqrt{p}\sigma_z e^{-\mathrm i\phi \sigma_z/2}$. Both $p$ and $\phi$ are functions of the parameter of interest $\theta$. This specific problem is of fundamental interest in the asymptotic theory of quantum channel estimation, because any quantum channel can simulate a logical dephasing channel by using error correction (with ancillae in general) \cite{Zhou2021PRXQ}.

Ref.~\cite[Eq.~(B5)]{Zhou2021PRXQ} showed that this channel has a ``Hamiltonian'' independent of $p$:
$H=\mathrm i \sum_jK_j^\dagger \dot K_j=\frac{\dot{\phi}}{2}\sigma_z$. The authors therefore concluded that HL can only be achieved when (i) $\dot \phi \neq 0$ and (ii) $p=0$ or $1$, according to the HNKS condition. However, our analysis shows that HL can also be achieved if (i) $\dot p\neq 0$ and (ii) $p=0$ or $1$. Specifically, $T_\theta=\sum_i K_i\otimes K_i^*$ has 4 eigenvalues: $\{1,1,(1-2p)e^{-\mathrm i \phi},(1-2p)e^{\mathrm i \phi}\}$, and the eigenvectors are mutually orthogonal. For $\lambda=(1-2p)e^{\mathrm i \phi}$ we have $\left.\dot \lambda\right\rvert_{p=0} = (\mathrm i\dot \phi -2\dot p)e^{\mathrm i \phi}$ and $\left.\dot \lambda\right\rvert_{p=1} = (-\mathrm i\dot \phi -2\dot p)e^{\mathrm i \phi}$, so by Theorem~\ref{thm:condition for HL} we can choose an input state $\rho_0=I/2+\alpha (\ketbra{0}{1}+\ketbra{1}{0})$ (for some $0<\alpha<1/2$) which achieves the HL when (i) $p=0$ or $1$ and (ii) either $\dot\phi\neq 0$ or $\dot p\neq 0$.

\subsection{HL without decoherence-free subspace}\label{sec:example HL beyond QEC}
Existing QEC protocols achieving the HL construct a DFS, where the noise is eliminated but the signal remains to take effect. By QEC the simulated channel is an identity channel on the logical subspace at $\theta=\theta_0$ \cite{Zhou2021PRXQ}. Our analysis, however, opens up new possibilities to achieve the HL without constructing the DFS.

Consider $N$ queries to a quantum channel
described by Kraus operators $K_1=\ketbra{2}{0}$, $K_2=\ketbra{2}{1}$, $K_3=\sqrt{2\theta}\ketbra{2}{2}$, $K_4=\sqrt{1/2-\theta}\ketbra{0}{2}$, and $K_5=\sqrt{1/2-\theta}\ketbra{1}{2}$, where $0\le \theta \le 1/2$ is the parameter of interest. Now $T_\theta=\sum_i K_i\otimes K_i^*$ has 2 nonzero eigenvalues: $\{1,-1+2\theta\}$, with the fixed point $\mathrm{diag}\{1/2-\theta,1/2-\theta,1\}$ and the other eigenmatrix $\mathrm{diag}\{1/2,1/2,-1\}$ (where $\mathrm{diag}\{a_i\}_i$ denotes a diagonal matrix with diagonal elements $\{a_i\}_i$). At $\theta=0$, we can choose an input state $\rho_0=\mathrm{diag}\{1/4,1/4,1/2\}+\alpha \mathrm{diag}\{1/4,1/4,-1/2\}$ for some $-1<\alpha<1$ and $\alpha\neq 0$. Here these two eigenvectors are not orthogonal. By Lemma \ref{lem:asymptotic associated QFI}, the associated QFI of the asymptotic output state at $\theta=0$ exhibits an interesting oscillating behaviour given by
\begin{equation}
    \lim_{N\rightarrow\infty}\frac{\tilde F^Q \left(\tilde \rho_\theta\right)}{N^2} =  \begin{cases}
        \frac{128\alpha^2}{(3\alpha^2+2\alpha+3)^2},&\text{if}\ N \text{ is odd},\\
        \frac{128\alpha^2}{(3\alpha^2-2\alpha+3)^2},&\text{if}\ N \text{ is even}.
    \end{cases}
\end{equation}
The QFI of the ouput state is lower bounded by
\begin{equation}\label{eq:oscillating QFI bound}
    \lim_{N\rightarrow\infty} \frac{F^Q \left(\rho_\theta\right)}{N^2} \ge \lim_{N\rightarrow\infty}\frac{\Tr\left(\rho_\theta^2\right)}{4\lambda_{\mathrm{max}}(\rho_\theta)N^2}\tilde F^Q \left(\tilde \rho_\theta\right) = \begin{cases}
        \frac{8\alpha^2}{(3\alpha^2+2\alpha+3)(1+\alpha)},&\text{if}\ N \text{ is odd},\\
        \frac{16\alpha^2}{(3\alpha^2-2\alpha+3)(1+\alpha)},&\text{if}\ N \text{ is even},
    \end{cases}
\end{equation}
when $1/3<\alpha<1$. However, Eq.~(\ref{eq:oscillating QFI bound}) only gives a lower bound on the actual QFI, so it is unclear whether the QFI also oscillates with $N$ asymptotically. We compute the exact QFI for this example, and find that this oscillating behaviour indeed exists for fairly large $N$, as illustrated in Fig.~\ref{fig:oscillating QFI}. Since $N$ is a finite number in a realistic scenario, we anticipate that this phenomenon may be of practical interest and deserves further investigation.

\begin{figure}[!htbp]
    \centering
    \includegraphics[width=0.6\textwidth]{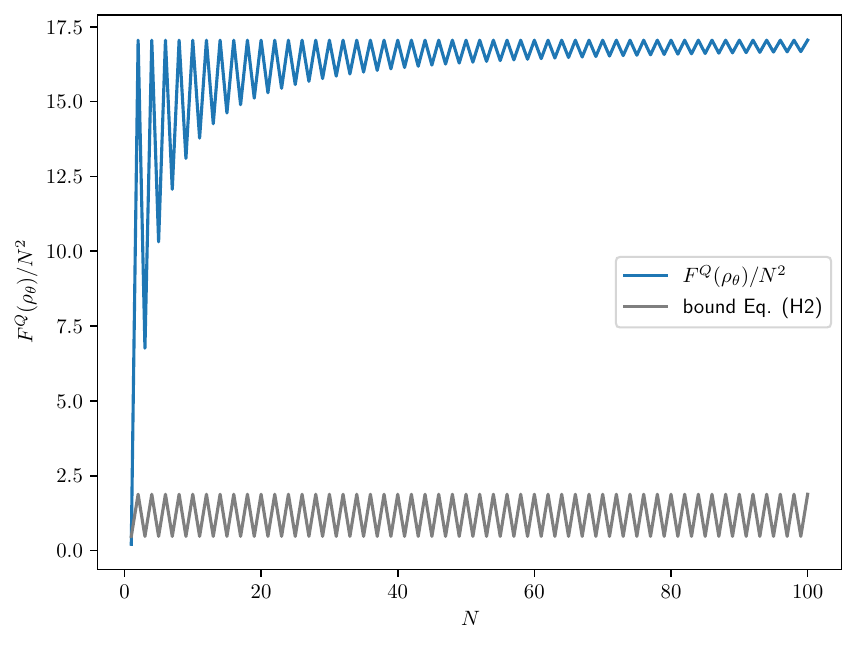}
    \caption{Comparison between the actual normalized QFI $F^Q(\rho_\theta)/N^2$ and the bound Eq.~(\ref{eq:oscillating QFI bound}). The input state is a fixed state $\rho_0=\mathrm{diag}\{1/4,1/4,1/2\}+\alpha \mathrm{diag}\{1/4,1/4,-1/2\}$ for $\alpha=0.9$.}
    \label{fig:oscillating QFI}
\end{figure}

Furthermore, we remark that this quantum channel is irreducible, i.e., having a unique full-rank fixed point, and does not admit a DFS. The HL here is therefore not a result of DFS, but stems from the decay of root-of-unity peripheral eigenvalues corresponding to diagonalizable conserved quantities \cite{Albert2019asymptoticsof}. It is worth noting that the HNKS condition is also ill-defined for this example.

\end{document}